\begin{document}

\frontmatter          % for the preliminaries

\pagestyle{headings}  % switches on printing of running heads
\addtocmark{The $L(2,1)$-Labeling Problem on Oriented Regular Grids} % additional mark in the TOC

\newtheorem{observation}[theorem]{Observation}

\spnewtheorem*{sproof}{Sketch of Proof}{\itshape}{\rmfamily}

\newcommand{\DENSE}{%
\setlength{\labelwidth}{12pt}%
\setlength{\labelsep}{2pt}%
\setlength{\leftmargin}{\labelwidth}%
\addtolength{\leftmargin}{\labelsep}%
\setlength{\parsep}{0pt}%
\setlength{\itemsep}{0pt}%
\setlength{\topsep}{0pt}
}
\newenvironment{ditemize}{\begin{list}%
{$-$~}{\DENSE}}{\end{list}}
\font\testo = cmss10 at 12pt
\font\scr = cmss10
\font\scrscr = cmss10 at 6pt
\def\erre{\mathchoice{\hbox{\testo
I\kern-.17emR}}{
   \hbox{\testo I\kern-.17emR}}{\hbox{\scr
I\kern-.17emR}}{
   \hbox{\scrscr I\kern-.17emR}}}
\def\enne{\mathchoice{\hbox{\testo
I\kern-.17emN}}{
   \hbox{\testo I\kern-.17emN}}{\hbox{\scr
I\kern-.17emR}}{
   \hbox{\scrscr I\kern-.17emN}}}
   
\mainmatter              % start of the contributions
\title{The $L(2,1)$-Labeling Problem on Oriented Regular Grids}
\titlerunning{The $L(2,1)$-Labeling Problem on Oriented Regular Grids}  % abbreviated title (for running head)
%                                     also used for the TOC unless
%                                     \toctitle is used
%
\author{Tiziana Calamoneri\inst{1}}
\authorrunning{Tiziana Calamoneri}   % abbreviated author list (for running head)
 % additional mark in the TOC
\institute{Department of Computer Science \\
   University of Rome ``Sapienza'' - Italy\\
   via Salaria 113, 00198 Roma, Italy.\\
\email{e-mail: calamo@di.uniroma1.it}
}

\tocauthor{Tiziana Calamoneri (University of Rome `'Sapienza'')}
\institute{Department of Computer Science\\ 
University of Rome ``Sapienza'' - 
Italy\\
via Salaria 113, 00198 Roma, Italy. \\
\email{e-mail: calamo@di.uniroma1.it}}

\maketitle              % typeset the title of the contribution

%\cleardoublepage

\begin{abstract}
The $L(2,1)$-labeling of a digraph $G$ is a function $f$ from the node set  of $G$  to the set of all nonnegative integers such that $|f(x)-f(y)| \geq 2$ if $x$ and $y$ are at distance 1, and $f(x) \neq f(y)$ if $x$ and $y$ are at distance 2, where the distance from vertex $x$ to vertex $y$ is the length of a shortest dipath from $x$ to $y$. The minimum of the maximum used label over all  $L(2,1)$-labelings of $G$  is called $\vec{\lambda} (G)$.

In this paper we study the $L(2,1)$-labeling problem on squared, triangular and hexagonal grids and for them we compute the exact values of $\vec{\lambda}$.
\end{abstract}

{\bf keywords}
$L(h,k)$-labeling,
graph coloring,
digraphs,
regular grids.

%sezione:introduzione
\section{Introduction}

The {\em $L(2,1)$-labeling problem} has been proposed by Griggs and Yeh \cite{GY92} as a variation of the frequency assignment problem introduced by Hale \cite{H80}.
It refers to the frequency assignment problem of wireless networks where fixed antennas can both transmit and receive messages.
More precisely, neighbor transmitters must use frequencies at least $2$
apart, and receivers that are neighbors of the same transmitter (two hops
far) must use frequencies at least 1 apart.
This problem can be modeled as a coloring problem, where the aim is to minimize $\lambda$, i.e. the maximum used color.
Since its definition, a huge number of works have been produced on this topic (some surveys looking at the problem from different point of views are \cite{Aal01,C06,Y06}).
A natural extension, recently introduced \cite{CL03}, is the $L(2,1)$-labeling on digraphs.
The {\em $L(2,1)$-labeling of a digraph} $G$ is a function $f$ from the node set  of $G$  to the set of all nonnegative integers such that $|f(x)-f(y)| \geq 2$ if $x$ and $y$ are at distance 1, and $f(x) \neq f(y)$ if $x$ and $y$ are at distance 2, where the distance from vertex $x$ to vertex $y$ is the length of a shortest dipath from $x$ to $y$. The minimum of the maximum used label over all  $L(2,1)$-labelings of $G$  is called $\vec{\lambda} (G)$.

We underline that in the following the maximum used color will be called $\vec{\lambda}$ in the directed case and $\lambda$ in the undirected case.

To the best of the author's knowledge, only two papers deal with the $L(2,1)$-labeling of digraphs.
The first one, by Chang and Liaw \cite{CL03}, introduces the definition and  studies the problem on ditrees, surprisingly proving that $\vec{\lambda} (T) \leq 4$ for any ditree $T$, while in the non-oriented case $\lambda$ is either $\Delta +1$ or $\Delta +2$, where $\Delta$ is the maximum degree of the tree.
The second paper comes some years later \cite{Cal07} and deals with the more general $L(h,k)$-labeling problem on bipartite digraphs with conditions on the length of the longest dipath.

Both papers suggest that the most important parameter of the digraph giving information on $\vec{\lambda}$ is the length of the longest dipath, differently with respect to the non oriented case, in which the value of $\lambda$ is usually function of the maximum degree $\Delta$.

In this paper we approach the oriented $L(2,1)$-labeling problem on regular grids, i.e. squared, triangular and hexagonal grids, stressing again the importance of the length of the longest dipath. 
More precisely, we evaluate the smallest value of the length of the longest dipath such that a certain value of $\vec{\lambda}$ holds.
Nevertheless, at the end, we propose a different parameter of the graph to be kept in consideration when studying its $L(2,1)$-labeling. 
Such a parameter is the girth of the digraph, and we suspect that it makes difference if it is larger than or equal to 5.
We highlight that the previous papers could not deal with it, as they restrict the problem to trees (no girth) and to bipartite graphs whose longest dipath has length at most three (girth $\leq 4)$.

\medskip

The rest of this paper is organized as follows. 
Next section is devoted to recall definitions and useful results, to prove some simple bounds and to survey all the results known on the oriented $L(2,1)$-labeling problem. 
Sections 3, 4 and 5 focus on squared, triangular and hexagonal grids, respectively.
Finally, in Section 6 we address some conclusions, list some open problems and state an interesting conjecture. 

%%%%%%%%%%%%%%%%%%%%%%%%%%%%%%%%%%%%%%%%%%%%%%%%%%%%%%
\section{Preliminaries  and Discussion of the Results}

An {\em (oriented) $L(2,1)$-labeling} of a
digraph $D=(V, E)$ is a function $f: V \rightarrow \enne$ such that
\begin{ditemize}
\item  $|f(u)-f(v)| \geq 2$ if $(u,v) \in E$ and
\item  $f(u) \neq f(v)$ if there exists $w \in V$ such that $(u, 
w) \in E$ and $(w, v) \in E$.
\end{ditemize}

The {\em span} of an $L(2,1)$-labeling is the difference between the
largest and the smallest value of $f$, so it is not restrictive to
assume 0 as the smallest value.
We denote by
$\vec{\lambda}(D)$ (or simply $\vec{\lambda}$ when the digraph is clear from the context) the smallest integer $\sigma$
such that digraph $D$ has an $L(2,1)$-labeling of span $\sigma$.

\medskip

In this paper we study the $L(2,1)$-labeling problem on the orientations of the three regular tilings of the plane, i.e. the infinite graphs whose a portion is depicted in Figure \ref{fig.tilings}.

\begin{figure}[ht]
\begin{center}
    \epsfxsize=11cm
	\epsfbox{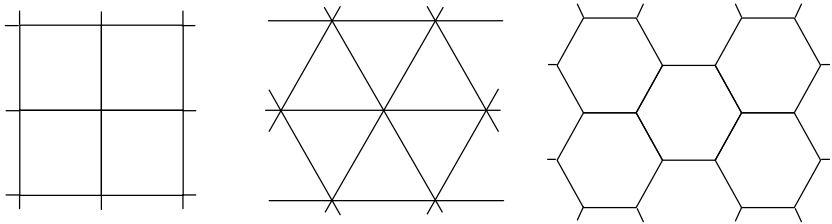}
   \caption{The three regular tilings of the plane.}
   \protect\label{fig.tilings}
\end{center}
\end{figure}

To our aims, we will show that we do not need a whole infinite tiling, but it is enough a finite portion of it. So, in the following we will call {\em grid} the subgraph of an infinite tiling induced by a finite subset of its nodes. 
As we are interested in a coloring of such a grid, in order to determine the maximum needed color, it is not restrictive to assume that the grid is a connected graph. We will call a grid {\em squared, triangular} or {\em hexagonal} if it is an induced subgraph of a squared, a triangular or a hexagonal tiling, respectively. We will use the symbols $G_{\Delta}$ and $T_{\Delta}$ to mean a grid or an infinite tiling, respectively; the values of $\Delta=3,4,6$ indicate the degree and hence the hexagonal, squared or triangular shape of the faces.

\medskip

An {\em oriented graph} is a digraph without opposite arcs. We consider {\em oriented grids}, i.e. grids whose each edge has a fixed orientation. The (undirected) graph obtained by an oriented grid by removing the orientation of all its arcs is called the {\em underlying grid}.
Observe that several oriented grids can correspond to the same underlying grid. 

All over this paper, where no differently specified, we will consider {\em non-trivial grids}, i.e. grids whose underlying grid contains at least one cycle.

The following property is an obvious consequence of the definition of $\vec{\lambda}(D)$.

\begin{theorem} \cite{Cal07}
\label{th.subgraph}
If $H$ is a subgraph of a digraph $D$, then $\vec{\lambda}(H) \leq \vec{\lambda}(D)$.
\end{theorem}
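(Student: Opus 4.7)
The plan is the standard restriction argument: take an optimal $L(2,1)$-labeling of $D$, restrict it to the vertex set of $H$, and verify that the two labeling constraints remain satisfied on $H$. Since restricting a function cannot enlarge its image, the span can only go down (or stay the same), which yields the inequality.

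Concretely, I would fix an optimal labeling $f:V(D)\to\enne$ achieving span $\vec{\lambda}(D)$, and define $f_H := f\!\restriction_{V(H)}$. For the distance-1 condition, any arc $(u,v)\in E(H)$ is by definition also an arc of $D$, so $|f_H(u)-f_H(v)|=|f(u)-f(v)|\geq 2$ holds directly. For the distance-2 condition, suppose $u,v\in V(H)$ satisfy $(u,w),(w,v)\in E(H)$ for some $w\in V(H)$; then the same dipath exists in $D$, so the distance from $u$ to $v$ in $D$ is either $1$ or $2$. In the first case $|f(u)-f(v)|\geq 2>0$, and in the second case $f(u)\neq f(v)$ by the definition of an $L(2,1)$-labeling of $D$; either way $f_H(u)\neq f_H(v)$, as required.

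Having checked both conditions, $f_H$ is a valid $L(2,1)$-labeling of $H$, and its span is at most $\max_{V(H)} f - \min_{V(H)} f \leq \max_{V(D)} f - \min_{V(D)} f = \vec{\lambda}(D)$. By minimality of $\vec{\lambda}(H)$, the claimed inequality follows.

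There is essentially no obstacle here; the only point worth mentioning is the mild subtlety in the distance-2 step, namely that a length-2 dipath in $H$ need not witness distance exactly $2$ in $D$ (the shortcut arc $(u,v)$ could already lie in $D$). This case is handled for free because the stronger distance-1 constraint of $D$ then forces $|f(u)-f(v)|\geq 2$, which implies $f(u)\neq f(v)$ a fortiori.
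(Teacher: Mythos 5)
Your restriction argument is correct and is exactly the standard proof the paper has in mind: it states this result as an ``obvious consequence of the definition'' and cites \cite{Cal07} without giving details. Your handling of the distance-2 step (where the length-2 dipath in $H$ might be shortcut by an arc of $D$) is a nice touch of care, and under the paper's working definition -- which imposes $f(u)\neq f(v)$ whenever a length-2 dipath exists, not only when the distance is exactly 2 -- that case is subsumed anyway.
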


This result allows us to say that, if we are able to find a grid $G$ for which we can prove $\vec{\lambda}(G)=k$ for a certain $k$, then there exist infinite grids $G'$ such that $\vec{\lambda}(G') \geq k$, being $G'$ is any grid containing $G$ as subgrid.

\medskip

Now, we easily prove the following general results that we will exploit later.

\begin{theorem}
\label{th.upperbound}
If $D$ is a digraph and $U$ is its underlying graph, then the minimum span necessary to $L(2,1)$-label  digraph $D$ never exceeds the minimum span necessary to $L(2,1)$-label its underlying (undirected) graph, i.e. $\vec{\lambda}(D) \leq \lambda(U)$.
\end{theorem}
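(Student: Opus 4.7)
The plan is to produce an $L(2,1)$-labeling of $D$ of span at most $\lambda(U)$ by simply re-using an optimal $L(2,1)$-labeling of $U$. The key observation is that forgetting orientations only enlarges the set of pairwise labeling constraints: every directed adjacency and every directed distance-$2$ pair in $D$ corresponds to a pair that an $L(2,1)$-labeling of $U$ must already handle.

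More concretely, I would fix an optimal $L(2,1)$-labeling $f:V\to\enne$ of $U$ with span $\lambda(U)$ and verify the two conditions that $f$ needs to satisfy in order to qualify as an $L(2,1)$-labeling of $D$. The first condition is immediate: every arc $(u,v)\in E(D)$ corresponds to an edge $\{u,v\}$ of $U$, so $|f(u)-f(v)|\geq 2$ follows from the validity of $f$ on $U$. For the second, suppose some $w\in V$ satisfies $(u,w),(w,v)\in E(D)$. Then $\{u,w\}$ and $\{w,v\}$ are both edges of $U$, so in $U$ the vertices $u$ and $v$ are at undirected distance $1$ or $2$. In the former case, $|f(u)-f(v)|\geq 2$ already forces $f(u)\neq f(v)$; in the latter, the distance-$2$ condition of the undirected labeling gives $f(u)\neq f(v)$ directly. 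Either way, the required inequality holds.

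Having verified that $f$ is a valid $L(2,1)$-labeling of $D$, its span on $D$ equals $\lambda(U)$, and the minimality in the definition of $\vec{\lambda}(D)$ yields $\vec{\lambda}(D)\leq\lambda(U)$. I do not expect a genuine obstacle here: the content of the statement is essentially that the directed constraint system is a subsystem of the undirected one. The only care needed is in the distance-$2$ step, where one must split on whether the endpoints of the dipath $u\to w\to v$ happen to be adjacent in $U$ or not, since the definition of $L(2,1)$-labeling treats these two situations with different inequalities.
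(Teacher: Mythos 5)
Your proposal is correct and follows essentially the same route as the paper: reuse an optimal $L(2,1)$-labeling of $U$ on $D$, observing that every directed constraint of $D$ corresponds to an (at least as strong) undirected constraint of $U$. You are in fact slightly more careful than the paper's one-line argument, since you explicitly handle the case where the endpoints of a dipath $u \rightarrow w \rightarrow v$ are adjacent in $U$, where the distance-$1$ condition rather than the distance-$2$ condition supplies $f(u) \neq f(v)$.
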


\begin{proof}
Two nodes at distance 1 in $U$ are at distance 1 in $D$, too, while two nodes at distance 2 in $U$ are at (oriented) distance $\geq 2$ (possibly $\infty$) in $D$ because of the orientation of the arcs. So, the statement immediately follows. \qed
\end{proof}

We remind that, for the (undirected) regular tilings of the plane, it holds that  $\lambda(T_{\Delta})=\Delta+2$, $\Delta = 3,4,6$ \cite{CP04} so, applying Theorems \ref{th.subgraph} and \ref{th.upperbound}, we get that, for any (oriented) grid $G_{\Delta}$,  $\vec{\lambda}(G_{\Delta}) \leq \Delta +2$, $\Delta=3,4,6$. These values provide us upper bounds on $\vec{\lambda}$ for the grids of all three types.

\begin{theorem}
\label{th.lowerbound}
If $D$ is a digraph and the length of the maximum dipath is either 2 or 3 then $\vec{\lambda}(D) \geq 3$; if the length of the maximum dipath is $ \geq 4$ then $\vec{\lambda}(D) \geq 4$.
\end{theorem}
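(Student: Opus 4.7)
The plan is to prove both lower bounds by contradiction, analyzing the possible label assignments along a suitably long dipath. In each case the key tool is the \emph{compatibility graph} on the candidate label set $\{0,1,\ldots,k\}$, whose edges join pairs of labels differing by at least $2$; a valid labeling restricted to a dipath must traverse edges of this graph while additionally assigning distinct values to any two vertices at oriented distance exactly~$2$.

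For the first claim, since a dipath of length $3$ contains a sub-dipath of length $2$, it suffices to assume the existence of a dipath $u \to v \to w$. Suppose for contradiction that all labels lie in $\{0,1,2\}$. The only pair of labels in this set differing by at least $2$ is $\{0,2\}$, so label $1$ is unusable at an internal vertex of a dipath, forcing $f(v) \in \{0,2\}$. Then both $f(u)$ and $f(w)$ are forced to equal the unique element of $\{0,2\} \setminus \{f(v)\}$, which gives $f(u)=f(w)$ and contradicts the distance-$2$ requirement.

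For the second claim I would consider five consecutive vertices $v_0 \to v_1 \to v_2 \to v_3 \to v_4$ along a dipath of length at least $4$ and assume all labels lie in $\{0,1,2,3\}$. The edges of the compatibility graph are $\{0,2\},\{0,3\},\{1,3\}$. A case split on $f(v_1)$ is the cleanest route. If $f(v_1)\in\{1,2\}$, then $f(v_1)$ has a unique compatible neighbor in $\{0,1,2,3\}$, forcing $f(v_0)=f(v_2)$ and an immediate contradiction. If $f(v_1)=0$, the distance-$2$ condition forces $\{f(v_0),f(v_2)\}=\{2,3\}$; the subcase $f(v_2)=2$ forces $f(v_3)=0=f(v_1)$, and the subcase $f(v_2)=3$ forces $f(v_3)=1$ (since $f(v_3)\ne f(v_1)=0$) and then $f(v_4)=3=f(v_2)$, both contradictions. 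The case $f(v_1)=3$ is symmetric under the relabeling $i\mapsto 3-i$.

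The argument is essentially routine once the compatibility-graph viewpoint is adopted. The only point demanding care is completeness of the case analysis in the second part: the distance-$2$ condition must be checked for \emph{every} consecutive triple $(v_i,v_{i+1},v_{i+2})$, so one has to propagate the forced choices all the way to $v_4$ rather than stopping as soon as a single constraint has been exploited.
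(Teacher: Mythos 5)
Your proof is correct: in each case you extract a dipath of the required length, and the case analysis on the restricted labeling is exhaustive and sound (the symmetry $i \mapsto 3-i$ legitimately disposes of the last case). The approach is essentially the paper's --- reduce to a dipath contained in $D$, whose $L(2,1)$-constraints coincide with those of the corresponding undirected path --- except that the paper simply invokes Theorem~\ref{th.subgraph} together with the known path values $\lambda(P_3)=\lambda(P_4)=3$ and $\lambda(P_n)=4$ for $n \geq 5$ from the cited literature, whereas you re-derive these lower bounds directly, which makes your argument self-contained but introduces no new idea.
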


\begin{proof}
The claim follows from Theorem \ref{th.subgraph} and from the $L(2,1)$-labeling of a path \cite{GY92,Y90}, that can trivially be transferred to label a dipath of the same length. 
We remind that path $P_3$ of length 2 can be optimally labeled as $1 \rightarrow 3 \rightarrow 0$, path $P_4$ of length 3 as $1 \rightarrow 3 \rightarrow 0 \rightarrow 2$ and a path of length equal to or larger than 4 can be labeled as $0 \rightarrow 2 \rightarrow 4 \rightarrow 0 \rightarrow 2 \rightarrow \ldots $. \qed
\end{proof}

We say a digraph to be {\em bipartite} if its underlying graph is bipartite.
It is easy to see that both a squared and a hexagonal oriented grid are bipartite digraphs while a triangular oriented grid is not.

\medskip

We recall here some results on bipartite digraphs that will be useful in the following.

\begin{theorem} \cite{Cal07}
\label{th.bipartitel=2}
For any bipartite digraph $D$ whose longest dipath has length 2, $\vec{\lambda}(D)=3$.
\end{theorem}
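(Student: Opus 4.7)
The plan is as follows. The lower bound is immediate: since by hypothesis $D$ contains a dipath of length $2$, Theorem~\ref{th.lowerbound} gives $\vec{\lambda}(D)\geq 3$. All the work is in exhibiting an $L(2,1)$-labeling of $D$ using only the labels $\{0,1,2,3\}$.

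First, the structural step. I would classify each vertex of $D$ as a \emph{source} (arcs only outgoing), a \emph{sink} (arcs only incoming), a \emph{middle} (arcs of both kinds), or isolated. The key observation is that the hypothesis on the longest dipath forbids any arc between two middles: if $w\to w'$ with $w,w'$ both middles, then a predecessor $u$ of $w$ and a successor $v$ of $w'$ (which exist precisely because both vertices are middles) yield a dipath $u\to w\to w'\to v$ of length $3$, contradicting the hypothesis. Consequently every arc is either from a source to a middle, from a middle to a sink, or directly from a source to a sink, and every length-$2$ dipath has the form source-middle-sink. In particular, the only pairs at oriented distance exactly $2$ are source-sink pairs that share a common middle.

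Second, the labeling and its verification. Fix a bipartition $A\sqcup B$ of the underlying graph of $D$ and let $f$ assign label $3$ to every source or middle in $A$, label $2$ to every sink in $A$, label $0$ to every source or middle in $B$, label $1$ to every sink in $B$, and any label to each isolated vertex. Since every arc crosses the bipartition, the distance-$1$ check reduces to a small case analysis: an $A$-vertex labelled $3$ can only be adjacent to a $B$-middle (gap $3$) or a $B$-sink (gap $2$); an $A$-sink labelled $2$ can only be entered from a $B$-source or $B$-middle (gap $2$); and the two $B$-to-$A$ cases are symmetric. For the distance-$2$ condition, by the structural step only source-sink pairs in the same part matter, and these always receive distinct labels ($3$ vs.\ $2$ in $A$, $0$ vs.\ $1$ in $B$).

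The chief obstacle is the middle-to-middle claim in the first step; once that is in hand, bipartiteness becomes decisive, since the two sides can be handled with complementary palettes $\{2,3\}$ and $\{0,1\}$ which avoid the otherwise problematic gap-$1$ conflict across the bipartition. Without bipartiteness (for instance on a directed triangle) the same idea fails and a fourth label becomes necessary, consistent with the general bound $\vec{\lambda}\leq 4$ on digraphs whose longest dipath has length~$\geq 2$.
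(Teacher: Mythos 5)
The paper does not actually prove this statement---it is quoted from \cite{Cal07}---so there is no in-paper proof to compare yours with; I can only judge the argument on its own merits. On those merits it is essentially right: the lower bound correctly comes from Theorem~\ref{th.lowerbound}, the classification into sources, sinks and middles, the exclusion of middle-to-middle arcs, and the two-palette labeling ($\{3,2\}$ on one side of the bipartition, $\{0,1\}$ on the other, with the larger gap reserved for the vertices that have out-arcs) verify cleanly for both the distance-1 and distance-2 conditions, and this is a legitimate self-contained proof of the cited result in the setting where it is true.

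One step, however, needs to be made honest. When you claim that $u\to w\to w'\to v$ is ``a dipath of length $3$'' you are silently assuming the four vertices are pairwise distinct. The case $u=v$ is excluded by bipartiteness (it would create a directed triangle, hence an odd cycle in the underlying graph), but you never say so; and the cases $u=w'$ or $v=w$ can only be excluded if $D$ has no opposite arcs, an assumption you never state even though the paper's own terminology (``an oriented graph is a digraph without opposite arcs'') suggests that a general digraph may contain them. This is not a pedantic point: if opposite arcs are allowed, your structural claim fails (take $u\leftrightarrow x\leftrightarrow v$: then $u,x,v$ are all middles, $u$ and $v$ lie in the same bipartition class and are at oriented distance $2$, so your labeling gives them the same color), and in fact the theorem itself fails (a star with center $x$, three leaves, and every edge replaced by a digon is bipartite with longest dipath of length $2$, yet its three leaves are mutually at distance $2$ and each at distance $1$ from $x$, forcing span at least $4$). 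So you should state explicitly that $D$ is digon-free---which is the reading under which \cite{Cal07}'s statement holds and the only one relevant to the oriented grids of this paper---and add the two-line distinctness check; with that, your proof is complete.
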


\begin{theorem} \cite{Cal07}
\label{th.bipartitel=3}
For any bipartite digraph $D$ whose longest dipath has length 3,  $3 \leq \vec{\lambda}(D) \leq 4$.
\end{theorem}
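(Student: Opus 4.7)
The lower bound is for free: since $D$ contains a dipath of length~$3$, Theorem~\ref{th.lowerbound} gives $\vec{\lambda}(D)\geq 3$.

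For the upper bound the plan is to exhibit an $L(2,1)$-labeling of $D$ using only the four values $\{0,1,3,4\}$. Let $(A,B)$ be the bipartition of the underlying graph of~$D$. I would assign to every vertex of $A$ a label in $\{0,1\}$ and to every vertex of $B$ a label in $\{3,4\}$. Since every arc of $D$ joins $A$ and $B$ and the two palettes are separated by at least $2$, the distance-$1$ condition $|f(u)-f(v)|\geq 2$ is then automatic. It only remains to decide, within each part, which of the two available labels each vertex receives, so that the distance-$2$ condition holds. Because $D$ is bipartite, any two vertices linked by some $2$-dipath lie on the same side of $(A,B)$, so the problem decouples into properly $2$-colouring two auxiliary graphs: $H_A$, on vertex set $A$, with an edge $\{a,a'\}$ whenever some $b\in B$ witnesses a dipath $a\to b\to a'$ or $a'\to b\to a$ in $D$, and its analogue $H_B$ on $B$.

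The heart of the proof is therefore the claim that $H_A$ and $H_B$ are bipartite, which I would establish by contradiction. Suppose $H_A$ has an odd cycle $a_0-a_1-\cdots-a_{2k}-a_0$; for each edge $\{a_i,a_{i+1}\}$ fix a witness $b_i\in B$ and a sign $\sigma_i\in\{+,-\}$ recording whether the realising dipath points $a_i\to b_i\to a_{i+1}$ or the other way round. The key observation is that two consecutive edges with the same sign splice together into a $4$-dipath $a_i\to b_i\to a_{i+1}\to b_{i+1}\to a_{i+2}$ in $D$, violating the hypothesis on the longest dipath. Hence the signs $\sigma_i$ must alternate around the cycle, which is impossible on a cycle of odd length---contradiction. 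The same argument applies to $H_B$, and a proper $2$-colouring of each with palettes $\{0,1\}$ and $\{3,4\}$ respectively yields the desired span-$4$ labeling.

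The delicate step on which the whole argument pivots is verifying, in the splicing, that the five vertices $a_i,b_i,a_{i+1},b_{i+1},a_{i+2}$ are genuinely distinct. The three $a$'s are distinct as consecutive vertices of a cycle, and they are separated from the two $b$'s by the bipartition; the only risk is $b_i=b_{i+1}$, which would force the pair of opposite arcs $b_i\to a_{i+1}$ and $a_{i+1}\to b_i$ to coexist in $D$. In the oriented setting in which the paper will apply the result this is automatically excluded, and the argument closes cleanly; in the fully general bipartite digraph case one has to add a case-analysis for antiparallel arcs, and I expect that to be where the bulk of the remaining technical work in \cite{Cal07} concentrates.
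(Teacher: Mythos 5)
The paper does not prove this statement at all: it is imported verbatim from \cite{Cal07}, so there is no internal proof to compare against, and I can only judge your argument on its own merits and against the likely original. So judged, it is correct in the setting the paper actually uses, and it is essentially the natural proof: the lower bound via Theorem \ref{th.lowerbound} is fine, and for the upper bound the palette split ($\{0,1\}$ on one side of the bipartition, $\{3,4\}$ on the other) makes the distance-one condition automatic and reduces everything to $2$-colouring the conflict graphs $H_A$, $H_B$; your key step --- that two consecutive same-sign witnessed $2$-dipaths around a cycle splice into a dipath of length $4$, so signs must alternate and odd cycles cannot exist --- is sound, including the wrap-around edge. The one place where I would correct you is your closing assessment of the antiparallel-arc issue: it is not a matter of additional case analysis hiding in \cite{Cal07}, because with opposite arcs the statement itself is false. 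Take a star $K_{1,4}$ in which every edge is replaced by a pair of opposite arcs, together with a disjoint dipath of length $3$: the underlying graph is bipartite and the longest dipath has length $3$, yet the four leaves are pairwise at oriented distance $2$ and each at distance $1$ from the centre, which forces span at least $5$; the same construction with three leaves already contradicts Theorem \ref{th.bipartitel=2}. Hence ``digraph'' in these imported statements must be read as excluding opposite arcs --- exactly the oriented setting of the paper's grids --- and under that reading $b_i\neq b_{i+1}$ is automatic, as you note, and your proof closes with no gap.
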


For non-bipartite digraphs, it is known the value of $\vec{\lambda}(D)$ only when the longest dipath has length 2, according to the following theorem.

\begin{theorem} \cite{Cal07}
\label{th.generall=2}
For any non-bipartite digraph $D$ whose longest dipath has length 2, $\vec{\lambda}(D)=4$.
\end{theorem}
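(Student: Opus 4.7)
The plan is to establish $\vec{\lambda}(D) = 4$ by proving matching upper and lower bounds.

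For the lower bound $\vec{\lambda}(D) \ge 4$, I would argue by contradiction. Suppose $\vec{\lambda}(D) \le 3$, so there is a valid labeling $f : V(D) \to \{0,1,2,3\}$. On every arc $(u,v)$ the constraint $|f(u) - f(v)| \ge 2$ is satisfiable inside $\{0,1,2,3\}$ only when one of $f(u), f(v)$ lies in $\{0,1\}$ and the other in $\{2,3\}$. Hence the partition $\bigl( f^{-1}(\{0,1\}),\, f^{-1}(\{2,3\}) \bigr)$ is a proper $2$-coloring of the underlying graph $U$, contradicting the hypothesis that $D$ is non-bipartite.

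For the upper bound, I would exploit the hypothesis that the longest dipath has length $2$ in order to decompose $V(D)$ into \emph{sources} (vertices with no in-arc), \emph{sinks} (vertices with no out-arc), \emph{middles} (both in- and out-arcs), and isolated vertices. If $v$ is a middle, fix arcs $u \to v$ and $v \to w$: an in-arc $t \to u$ would yield the forbidden length-$3$ dipath $t \to u \to v \to w$, so every in-neighbor of a middle must be a source. A symmetric argument shows every out-neighbor of a middle is a sink. It follows that the only arcs present in $D$ are of the types source $\to$ middle, middle $\to$ sink, or source $\to$ sink; in particular, every length-$2$ dipath has the form source $\to$ middle $\to$ sink.

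With this structural picture in hand, I would label every source with $0$, every middle with $2$, every sink with $4$, and isolated vertices with $0$. Then each arc has endpoints whose labels differ by $2$ or $4$, satisfying the distance-$1$ condition, and each length-$2$ dipath uses the three distinct labels $0,2,4$, satisfying the distance-$2$ condition. This yields $\vec{\lambda}(D) \le 4$ and completes the proof. The main obstacle is the structural classification of vertices in the upper bound; once it is established that middles can only be reached from sources and can only point to sinks, the $\{0,2,4\}$ labeling and its verification become essentially automatic.
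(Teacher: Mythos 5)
Your lower bound is fine: with labels in $\{0,1,2,3\}$, every arc forces one endpoint into $\{0,1\}$ and the other into $\{2,3\}$, so a span-$3$ labeling would yield a proper $2$-coloring of the underlying graph, contradicting non-bipartiteness. (Note that the paper itself does not prove this theorem; it quotes it from the literature, so there is no in-paper argument to compare against.)

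The upper bound, however, has a genuine gap: the structural classification fails. From an in-arc $t \to u$ together with $u \to v \to w$ you can only conclude there is a length-$3$ dipath when $t,u,v,w$ are four \emph{distinct} vertices, and the hypothesis forbids only dipaths (simple directed paths) of length $3$, not closed walks. The directed triangle $a \to b \to c \to a$ is the decisive counterexample: its longest dipath has length exactly $2$, it is non-bipartite, so it lies squarely inside the theorem's scope; yet every vertex is a ``middle'', each middle has a middle (not a source) as its in-neighbor, and your scheme would label all three vertices with $2$, violating the distance-$1$ condition (and the distance-$2$ condition as well, since $a$ and $c$ are at distance $2$). Indeed the triangle is exactly the configuration that forces the value $4$ (its three vertices must receive the pairwise-far labels $0,2,4$), so any correct upper-bound argument must accommodate directed $3$-cycles (and, for general digraphs, digons) rather than exclude them; the source/middle/sink decomposition with a constant label per class cannot do this, and the $\vec{\lambda}(D) \le 4$ half needs a different argument, e.g.\ showing that the graph joining pairs at oriented distance $1$ or $2$ admits a proper $3$-coloring by labels $\{0,2,4\}$, with directed $3$-cycles treated explicitly.
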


{\bf Discussion of the results. } In the following table we summarize all the known results concerning the $L(2,1)$-labeling of digraphs. Namely, all the bounds already known in the literature are those dealing with trees and bipartite graphs and those found for graphs having the longest dipath of length either 1 or 2; all the other results can be found in the present paper. 
The bounds with a question mark are not proved and we leave an open problem about them.
Symbol $l$ indicates the length of the longest dipath in the digraph.

\medskip

\hspace{-.4cm}
\begin{tabular}{| l | c | c | c | c | c | c |}
\hline
& general & trees & bipartite & squared & triangular & hexagonal \\
&  digraphs & & digraphs & grids & grids & grids \\
& & & & $\vec{\lambda} \leq 6$ & $\vec{\lambda} \leq 8$ & $\vec{\lambda} \leq 5$ \\
\hline
$l=1$ & $\vec{\lambda}=2$ & $\vec{\lambda}=2$ & $\vec{\lambda}=2$ &
     $\vec{\lambda}=2$  & $\vec{\lambda}=2$ & $\vec{\lambda}=2$  \\
\hline
$l=2$ & $\vec{\lambda}=4$ &   $\vec{\lambda}=3$ & $\vec{\lambda}=3$ &
    $\vec{\lambda}=3$ & $\vec{\lambda}=4$ & $\vec{\lambda}=3$  \\
\hline
$l=3$ $\vec{\lambda} \geq 3$ & & $\vec{\lambda} =3, 4$ &  $\vec{\lambda} =3, 4$  & 
    $\vec{\lambda}=3,4$ &  $\vec{\lambda}=3,4,5;\geq 6?$ &  $\vec{\lambda}=3,4$ \\
\hline    
$l \geq 4$ $\vec{\lambda} \geq 4$ & &  $\vec{\lambda}=4$ & & $\vec{\lambda}=4,5;6?$ &  $\vec{\lambda}=4,5,6; \geq 7?$ &  $\vec{\lambda} = 4; 5?$ \\
\hline
$l \geq 5$ $\vec{\lambda} \geq 4$ & &  '' & & '' &  $\vec{\lambda}=4,5,6,7; 8?$ &     ''   \\
\hline
$l \geq 6$ $\vec{\lambda} \geq 4$ & &  '' & & '' &  $\vec{\lambda}=4,5,6,7,8$ &  '' \\
\hline
$l \geq 8$ $\vec{\lambda} \geq 4$ & &  '' & & $\vec{\lambda}=4,5,6$ &  '' &  '' \\
\hline

\end{tabular}

\medskip

Some of the results in this paper are obtained {\em by construction}, i.e. we prove that any grid $G$ with a certain length of its longest dipath and a certain value of $\vec{\lambda}$ must have some fixed properties, that we exploit to produce a witness graph.

The remaining results are obtained {\em by example}, i.e. a special grid $G$ with given $\vec{\lambda}(G)$ is provided.

We underline that we do not aim at finding the minimal grid with a certain value of $\vec{\lambda}$, as it is not our interest. 
%Nevertheless, we have tried to provide completely different grids for increasing values of the length of the longest dipath, for any fixed value of $\vec{\lambda}$, instead of trivially appending dipaths to the tail of the existing longest dipath to increase the length and keep the same $\vec{\lambda}$.
%****questo non e' vero sempre. verificare*****

From the above table we can conclude that in the $L(2,1)$-labeling problem of oriented squared and triangular grids we leave a small open problem consisting in estimating the minimum length of the longest path in order to have a certain value of $\vec{\lambda}$; finally, for what concerns the hexagonal grid, we would have expected to find a grid with $\vec{\lambda}=5$ but we did not, nay we suspect that $\vec{\lambda}(T_3) \leq 4$ for any orientation of $T_3$'s edges. In the last section we will explain the reasons of this convinction.

%%%%%%%%%%%
\section{Squared Grids}

First remind that any squared grid $G_4$ is a bipartite graph, so if the longest dipath has length 3 then, from Theorem \ref{th.bipartitel=3}, we have $3 \leq \vec{\lambda}(G_4) \leq 4$.

The following theorem shows that both values are attainable.
\begin{theorem}
There exist two oriented squared grids $G_4$ and $G_4'$ whose longest dipath is 3 such that $\vec{\lambda}(G_4)=3$ and $\vec{\lambda}(G'_4)=4$.
\end{theorem}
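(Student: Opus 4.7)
The plan is to supply two separate constructions. Since any squared grid is bipartite, Theorem \ref{th.bipartitel=3} already gives $3 \le \vec{\lambda} \le 4$ whenever the longest dipath has length $3$, so the theorem just asks me to realise each of the two possible values by exhibiting explicit witnesses.

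For the grid $G_4$ with $\vec{\lambda}(G_4) = 3$, I would take a $2 \times 3$ portion of the squared tiling (two elementary squares sharing an edge) and orient every horizontal arc to the right and every vertical arc downward. The resulting digraph is acyclic, and an easy inspection shows the longest dipath has length exactly $3$ (for example top-left to top-middle to top-right to bottom-right). By Theorem \ref{th.lowerbound} this gives $\vec{\lambda}(G_4) \ge 3$. To match the lower bound, I would exhibit an explicit three-colour labeling: assign $(1,3,0)$ to the three top vertices from left to right and $(3,0,2)$ to the three bottom vertices. The only remaining task is to verify, for this $6$-vertex example, that every arc has endpoints differing by at least $2$ and that the four pairs at directed distance $2$ receive distinct labels, which is purely mechanical.

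For the grid $G_4'$ with $\vec{\lambda}(G_4') = 4$, I would take the single elementary $4$-cycle and orient it as a directed cycle $v_1 \to v_2 \to v_3 \to v_4 \to v_1$. A single oriented $4$-cycle is a (non-trivial) squared grid, and the longest simple dipath has length $3$. An explicit labeling such as $f(v_1),\dots,f(v_4) = 0,3,1,4$ witnesses $\vec{\lambda}(G_4') \le 4$. The heart of the argument is the matching lower bound $\vec{\lambda}(G_4') \ge 4$: here every pair of vertices is either adjacent (oriented distance $1$) or diagonally opposite on the cycle (oriented distance $2$ in at least one direction), so the constraints collapse to $|f(v_i)-f(v_{i+1})|\ge 2$ for consecutive vertices and $f(v_i)\ne f(v_{i+2})$ for the two diagonals.

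Assuming a labeling with values in $\{0,1,2,3\}$, I would perform a short case analysis on $f(v_1)$. By symmetry only $f(v_1)\in\{0,1\}$ has to be examined; in each case the requirement $|f(v_2)-f(v_1)|\ge 2$ leaves very few options for $f(v_2)$, and then the diagonal constraint $f(v_3)\ne f(v_1)$ together with $|f(v_3)-f(v_2)|\ge 2$ forces $f(v_3)$ uniquely (when it exists). Finally the conditions on $f(v_4)$ (distance $\ge 2$ from both $f(v_3)$ and $f(v_1)$ and $\ne f(v_2)$) turn out to be jointly infeasible inside $\{0,1,2,3\}$. This finite case analysis is the only non-routine step of the proof, but it is very short because the $4$-cycle has only four vertices. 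Once it is done, the two examples together establish the theorem.
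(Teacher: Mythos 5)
Your proposal is correct and takes essentially the same route as the paper: an explicit small grid carrying a feasible span-3 labeling for $G_4$, and a grid in which a dipath of length 3 wraps around a unit square (your directed 4-cycle is exactly such a configuration), where a short exhaustive argument shows colors $0,\dots,3$ cannot work, giving $\vec{\lambda}(G_4')=4$. The only cosmetic difference is that the paper obtains the upper bound for $G_4'$ from Theorem \ref{th.bipartitel=3} and the lower bound from the uniqueness of the optimal labeling $1\rightarrow 3\rightarrow 0\rightarrow 2$ of $P_4$, whereas you give an explicit span-4 labeling and a direct case analysis on the 4-cycle; both are routine and valid.
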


\begin{proof}
Grid $G_4$ with a possible labeling requiring colors from 0 to 3 is shown in Figure \ref{fig.squared3}.a.
It is easy to see that the provided labeling is feasible.
%Observe (for the sake of brevity, we will omit this observation in the following although it can always be done) that the existence of a grid $G_4$ having $\vec{\lambda}(G_4)=3$ implies the existence of infinite such grids, in view of Theorem \ref{th.subgraph}, infinite grids.

Now, observe that if the dipath of length 3 goes around a square of the grid (see Figure \ref{fig.squared3}.b), then it is not possible anymore to use on it the unique optimal coloring $1 \rightarrow 3 \rightarrow 0 \rightarrow 2$, because the nodes labeled 1 and 2 come in adjacency and this contradicts the definition of $L(2,1)$-labeling. So, for such grids it must be $\vec{\lambda} \geq 4$. In view of Theorem \ref{th.bipartitel=3}, $\vec{\lambda} \leq 4$ and the thesis follows.
\qed
\end{proof}

\begin{figure}[ht]
\begin{center}
    \epsfxsize=7cm
	\epsfbox{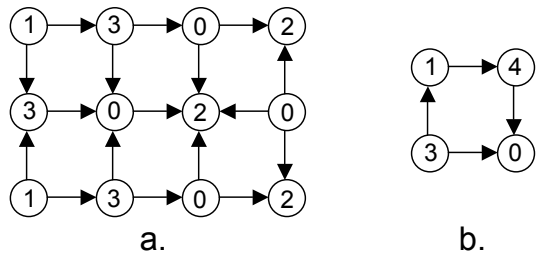}
   \caption{Two squared grids with the length of the longest dipath equal to three having: a. $\vec{\lambda}=3$ and b. $\vec{\lambda}=4$. Optimal labelings are depicted.}
   \protect\label{fig.squared3}
\end{center}
\end{figure}

If the longest dipath has length at least 4 then the graph contains a $P_4$ and hence $\vec{\lambda} \geq 4$.
From the other hand, it holds $\vec{\lambda} \leq \lambda =6$.
The following theorem shows that all three values are attainable, but the author  is convinced that no orientation of a squared grid having $\vec{\lambda}=6$ has longest dipath shorter than 8.
in other words, the suspect is that any squared grid $G_4$ such that $\vec{\lambda}(G_4)\geq 6$ must have its longest dipath of length at least 8.

\begin{theorem}
There exist three oriented squared grid $G_4$ and $G'_4$ whose longest dipath is 4 such that  $\vec{\lambda}(G_4)=4$ and  $\vec{\lambda}(G'_4)=5$.
\end{theorem}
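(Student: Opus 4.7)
The plan is to produce two explicit oriented squared grids, one witnessing $\vec{\lambda}=4$ and one witnessing $\vec{\lambda}=5$, in the style of the preceding theorem. Note that the general upper bound $\vec{\lambda}(G_4)\le \lambda(T_4)=6$ from Theorems \ref{th.subgraph} and \ref{th.upperbound}, together with the lower bound $\vec{\lambda}\ge 4$ from Theorem \ref{th.lowerbound} (since a dipath of length $4$ contains $P_4$), already pins the admissible values into $\{4,5,6\}$; we will realize the two smaller ones.

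For the $\vec{\lambda}=4$ witness, I would take a very small squared grid (for instance, a single oriented $4$-cycle together with one or two attached pendant arcs) and orient its edges so that no dipath exceeds length $4$ and no dipath of length $3$ closes around a square (the latter is precisely the obstruction identified in the previous theorem). I would then exhibit a labeling using colors $\{0,1,2,3,4\}$ derived from the canonical pattern $0\to 2\to 4\to 1\to 3$ on each maximal dipath, and verify the two $L(2,1)$-conditions locally at every arc and every directed $2$-path. Feasibility is a finite check.

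For the $\vec{\lambda}=5$ witness, the construction must simultaneously (i) contain a dipath of length $4$, (ii) be $5$-labelable, and (iii) admit no $4$-labeling. For (iii), the idea is to force a rigidity argument: overlap two dipaths of length $4$ along a square so that the unique $L(2,1)$-labelings of each dipath with only $5$ labels $\{0,\ldots,4\}$ (these are essentially the patterns $0\to 2\to 4\to 1\to 3$ and its reverses/shifts) clash on the shared vertices. Concretely, I would exploit a vertex $v$ lying at the head of one length-$4$ dipath and at the tail of another, both running across adjacent squares, and argue via Theorem \ref{th.subgraph} by checking all essentially different $4$-colorings of the local configuration: each candidate assignment either violates the $\ge 2$-gap on an arc or the distinctness on a directed $2$-path. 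The upper bound would then be provided by an explicit $5$-labeling on the same grid.

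The main obstacle is step (iii): the case analysis showing that no $4$-labeling exists. Since a $P_4$-subdipath admits only a small number of $4$-labelings (essentially the shifts of $1\to 3\to 0\to 2$ and its reverse, as observed in the proof of Theorem \ref{th.lowerbound}), the argument reduces to a finite enumeration, but one has to choose the witness grid so that every attempted extension of such a $P_4$-labeling across the shared square is blocked, either by an arc whose endpoints would end up at labels differing by $1$, or by two nodes at directed distance $2$ receiving the same label. I expect this to be the delicate part; everything else is bookkeeping on a concrete picture analogous to Figure \ref{fig.squared3}.
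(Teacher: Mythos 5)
Your proposal follows essentially the same route as the paper: both argue by example, exhibiting one grid whose longest dipath has length 4 together with a feasible labeling using colors 0--4 (optimal since $\vec{\lambda}\ge 4$ once the longest dipath has length at least 4), and a second such grid with a feasible labeling of span 5 whose optimality rests on an exhaustive check that no labeling with colors 0 to 4 is feasible. The only difference is one of concreteness, not of method: the paper commits to explicit witnesses (Figure \ref{fig.squared4}.a and \ref{fig.squared4}.b) while omitting the case analysis ``for the sake of brevity'', whereas you leave both the explicit grids and the finite verification as a plan.
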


\begin{figure}[ht]
\begin{center}
    \epsfxsize=10.5cm
	\epsfbox{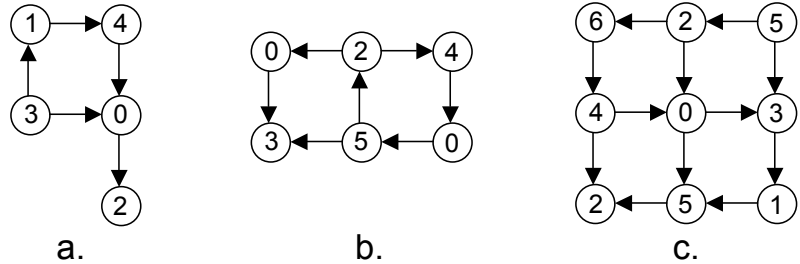}
   \caption{a. and b. Two squared grids with the length of the longest dipath equal to four having $\vec{\lambda}=4$ and $\vec{\lambda}=5$, respectively; c.a squared grid with the length of the longest dipath equal to eight having  $\vec{\lambda}=6$. Optimal labelings are depicted.}
   \protect\label{fig.squared4}
\end{center}
\end{figure}

\begin{proof} %controllata
We will prove this theorem by example. Consider the grids shown in Figure \ref{fig.squared4}.a. and \ref{fig.squared4}.b. The labelings are all feasible. 

In order to prove that they are optimal, observe that the one in Figure \ref{fig.squared4}.a is labeled with the minimum number of colors and hence is necessarily optimal.

For what concerns the grid in Figure \ref{fig.squared4}.b, the proof is leaded in an exhaustive case by case way; more precisely, it is possible to check that any labeling using colors from 0 to 4  is not feasible. We omit any further detail for the sake of brevity.
\qed
\end{proof}

\begin{theorem}
There exists an oriented squared grid $G_4$ whose longest dipath is 8 such that  $\vec{\lambda}(G_4)=6$.
\end{theorem}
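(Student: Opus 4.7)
My plan is to prove this theorem by example, in the same spirit as the previous two theorems. Since the upper bound $\vec{\lambda}(G_4) \leq 6$ already holds for every oriented squared grid, as a consequence of Theorem~\ref{th.upperbound} together with the known value $\lambda(T_4) = 6$, the task reduces to exhibiting one particular oriented squared grid whose longest dipath has length exactly $8$ and for which no feasible $L(2,1)$-labeling with colors in $\{0,1,2,3,4,5\}$ exists.

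First I would fix the grid drawn in Figure~\ref{fig.squared4}.c with the orientation shown there, and verify two easy facts by direct inspection: the drawing contains a directed path of length $8$, and the chosen orientation prevents any longer dipath from appearing. The labeling displayed in the figure then serves as a concrete witness of span $6$, and checking that it is a valid $L(2,1)$-labeling is a routine case-by-case verification of the adjacency and second-neighbor constraints on a finite graph.

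The main obstacle is the matching lower bound $\vec{\lambda}(G_4) \geq 6$. For this I would suppose by contradiction that a feasible labeling $f : V(G_4) \to \{0,1,2,3,4,5\}$ exists, fix a starting vertex $v_0$ on a dipath of length $8$, and perform a case analysis on $f(v_0)$. The two defining constraints, namely difference at least $2$ on each arc and distinct colors on every $2$-step directed walk, heavily restrict the propagation along the dipath; in addition, every $4$-face touched by the backbone reactivates the obstruction already exploited in Figure~\ref{fig.squared3}.b, namely that colors $1$ and $2$ cannot sit on consecutive vertices of an oriented $C_4$. The claim I would aim for is that, combined along the length-$8$ backbone, these two sources of forcing rule out every assignment using only the six colors $0,\ldots,5$. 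I expect the grunt work here to be lengthy but essentially mechanical, so following the style used earlier in this paper I would present one representative branch explicitly and omit the remaining cases for brevity. \qed
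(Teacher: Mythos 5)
Your proposal correctly handles the easy parts (the upper bound via Theorem~\ref{th.upperbound} and $\lambda(T_4)=6$, and the inspection that the grid of Figure~\ref{fig.squared4}.c has a dipath of length $8$), but the heart of the theorem --- the lower bound $\vec{\lambda}(G_4)\geq 6$ --- is left as a plan rather than an argument, and the plan is organized around the wrong mechanism. You propose to propagate constraints along the length-$8$ backbone, starting a case analysis on the color of its first vertex, and to reuse the obstruction from Figure~\ref{fig.squared3}.b (that the labels $1$ and $2$ of the unique optimal $P_4$-labeling collide on an oriented $C_4$). That obstruction is specific to $4$-color labelings of a $P_4$ and gives essentially nothing against labelings with six colors; and a case split on the color of one endpoint of the backbone does not visibly close, since a single dipath of length $8$ is easily $L(2,1)$-labeled with far fewer than six colors. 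You then declare the remaining work ``essentially mechanical'' and omit it, so no contradiction is actually derived.

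What makes the bound true for this particular orientation --- and what the paper's proof exploits --- is the distance structure around the central node: the center $e$ and its four neighbors $b,d,f,h$ are pairwise at oriented distance $1$ or $2$, so they must receive five distinct colors; since $e$'s color forbids an interval of colors to its neighbors, $e$ is forced to an extreme color (say $0$), and $b,d,f,h$ must take $2,3,4,5$ in some order. Only after this forcing does a finite case analysis (on the color of $d$, then of $b$, $a$, $g$, etc.) become short enough to carry out and yield a contradiction in every branch. Your write-up never identifies this structural fact (which depends on the chosen orientation and must be checked), and without it the exhaustive analysis you defer has no organizing principle and no reason to terminate in a contradiction. To repair the proof you should establish the mutual-distance-$\leq 2$ property of $\{e,b,d,f,h\}$ for the given orientation, derive the forced extreme color at $e$, and then actually present the resulting case analysis (or a complete computer check), rather than a single representative branch.
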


\begin{proof}
Consider the grid in Figure \ref{fig.squared4}.c. For it we prove that colors from 0 to 5 are not sufficient to label it. In order to do that, let us name the nodes of the graph from top to bottom and from left to right with letters from $a$ to $i$, so the nodes in the first row are named $a$, $b$ and $c$, the nodes in the second row are named $d$, $e$ and $f$ and the nodes in the third row $g$, $h$ and $i$.
By contradiction, let us suppose that it is possible to label this grid with colors from 0 to 5.
First observe that nodes $e$, $b$, $d$, $f$ and $h$ get different colors, as they are at mutual distance 1 or 2.
If $e$ uses a color among 1, 2, 3 or 4, then no enough colors are available because it forbids the use of three colors for $b$, $d$, $f$ and $h$. So $e$ must have either the first or the last color. Without loss of generality, let 0 be the color assigned to $e$. So, nodes $b$, $d$, $f$ and $h$ will receive colors 2, 3, 4 and 5 in some order.
Let us study the color of $d$. 
\begin{itemize}
\item
If $d$ gets color 2 and $b$ is assigned color either 4 or 5, then there are no available colors for $a$. Hence $b$ is colored with 3 and the colors assigned to $a$ and $g$ must necessarily be 5 and 4, respectively. But node $h$ has not a feasible color to be assigned.
\item
If $d$ gets color 3 and $b$ is assigned color 2, then $a$ must receive color 5 and $g$ color 1 and $f$ can have colors either 4 or 5; in both cases, $c$ do not have any feasible color. If $b$ gets color 4, then $a$, $c$, $g$ and $f$ take colors 1, 2, 5 and 5, respectively. Node $i$ can receive either 1 or 3; in both cases, for node $h$ there are no feasible colors.
So $b$ is labeled with 5, implying that $a$ and $g$ take colors 1 and 5, respectively. But, in this way, no colors are available for $h$.
\item
If $d$ gets color 4 and $b$ is assigned color 2, once again there are no available colors for $a$. So, $b$ is colored with either 3 or 5. If $b$ receives color 3, then nodes $a$, $c$, $g$ and $h$ receive colors 1, 5, 2 and 5, respectively. Node $i$ gets color either 1 or 3, and in both cases there are no feasible colors for $f$. It follows that $b$ must be colored with 5 and, in this case $a$ can receive either 1 or 2. If $a$ is labeled with 1 then $g$ is labeled with 2 and there are no colors available for $h$. Hence, $a$ must be labeled 2, and consequently $g$, $h$, $f$ and $i$ take colors 1, 3, 2 and 5; but now there are no feasible colors for $c$.
\item
If, finally, $d$ gets color 5 then $b$ can assume labels 2, 3 or 4. If $b$ has label 2, then no colors are available for $a$. If $b$ has label 3 then $a$ is obliged to assume color 1 and, consequently, $c$, $f$ and $h$ colors 5, 2 and 4, respectively. It follows that no colors remain available for $i$.
So, it must be that $b$ takes color 4 and $a$ takes color either 1 or 2; both these two colors for $a$ lead again to no color for  some node. 
\end{itemize}
A contradiction raised from assuming that colors from 0 to 5 were enough.
\qed
\end{proof}

Observe that the grid provided in the previous proof has a central node whose four neighbors are all at mutual distance two, and even their common neighbors are at mutual distance two. This is a sufficient condition to have $\vec{\lambda}=6$.
By exhaustively enumerating all grids with 9 nodes having this property, it turns out that the longest dipath has always length 8.

%So, for oriented squared grids, it is completely solved the problem of evaluating the smallest value of the length of the longest dipath such that a certain value of $\vec{\lambda}$ holds.

%%%%%%%%%%%%
\section{Triangular Grids}
%%%%%%%%%%%%

%%%%%%%%%%%%%%%%%%%
\subsection{Longest dipath of length 3}

If the longest dipath of a triangular grid  has length 3 then $\vec{\lambda} \geq 3$ as the graph  contains a $P_4$. 
From the other hand, $\vec{\lambda} \leq \lambda =8$.
So, for any triangular grid $G_6$,  $\vec{\lambda}(G_6)$ must be included between these values.

%lambda=3
In particular, notice that the only way to label a $P_4$ using colors from 0 to 3 is $1 \rightarrow 3 \rightarrow 0 \rightarrow 2$ (or vice-versa $2 \rightarrow 0 \rightarrow 3 \rightarrow 1$).
In both cases, if we consider a node adjacent to both the first and the second node of the path (no matter the orientation of the edges), there are no available colors for it between 0 and 3. 
The same reasoning can be done for the second and the third nodes, and for the third and the fourth nodes.
It is neither possible that two consecutive edges of $P_4$ lie on the sides of the same triangle, otherwise  adjacent colors would be at oriented distance 1.
It follows that the only triangular grid whose longest dipath is 3 having $\vec{\lambda}=3$ is in fact $P_4$.

\medskip

%%%%%si pu˜ tagliare fino al teorema nella versione breve%%%%
%lambda=4
For what concerns $\vec{\lambda}=4$, we can do the following considerations.
Refer to the non-oriented grid in Figure \ref{fig.triangular}.a, in which nodes are named from $a$ to $g$.
In order to allow to some of its orientations to have $\vec{\lambda}=4$, only one labeling is feasible (up to symmetries) and it is the one assigning 2 to nodes $a$, $c$ and $e$, 4 to nodes $b$, $d$ and $f$ and 0 to node $g$. Any orientation for which such labeling is feasible cannot have a dipath of length 2 between $a$ and $c$, so either $a \rightarrow b \leftarrow c$ or $a \leftarrow b \rightarrow c$. As these two cases are symmetric, it is not restrictive to assume that $a \rightarrow b \leftarrow c$. 
With the same reasoning, it is not possible to have a dipath of length 2 between $a$ and $c$, $c$ and $e$, $d$ and $f$, $e$ and $a$.
Hence, we have that $c \rightarrow d \leftarrow e \rightarrow f \leftarrow a$ (see Figure \ref{fig.triangular}.b).
It is neither possible to have a dipath of length 2 between $a$ and $c$ passing through $g$, so either $a \rightarrow g \leftarrow c$ or $a \leftarrow g \rightarrow c$. In both cases, the direction of edge $(g,e)$ is fixed as it cannot exists a dipath of length 2 between $a$ and $e$ passing through $g$.
A similar reasoning holds for nodes $b$, $d$ and $f$.
By combining all the possibilities, only the four orientations in Figures \ref{fig.triangular}.c--f (up to simmetries) are possible. The first three orientations have a longest dipath of length 2 while the last one of length 4.

\begin{figure}[ht]
\begin{center}
    \epsfxsize=10cm
	\epsfbox{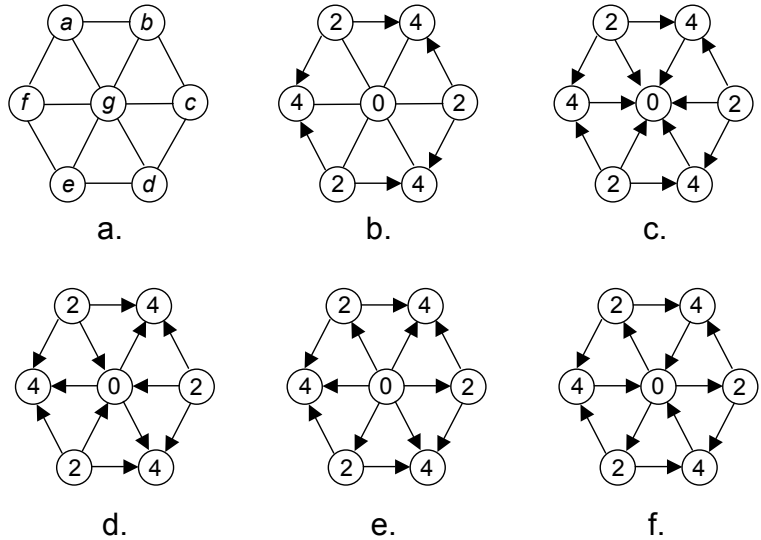}
   \caption{Reasoning for triangular grid with $\vec{\lambda}=4$. a. Underlying graph; b. unique possible labeling and compulsory orientations of some edges; c.--f. all possible orientations (up to symmetries) of graph in a. having $\vec{\lambda}=4$.}
   \protect\label{fig.triangular}
\end{center}
\end{figure}

It follows that a grid with a longest path of length 3 having $\vec{\lambda}=4$ cannot be the graph induced by a node and its six adjacent nodes.

\begin{theorem}
There exist triangular grids $G_6$ and $G_6'$, whose longest dipath is 3 such that $\vec{\lambda}(G_6)=4$ and $\vec{\lambda}(G_6')=5$.
\end{theorem}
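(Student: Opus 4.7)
My plan is to prove both statements by example, exhibiting explicit witness grids with explicit optimal labelings. The lower bound $\vec{\lambda} \geq 3$ is free from Theorem \ref{th.lowerbound}, so the real task is to push past $3$ and past $4$ respectively, while keeping the longest dipath equal to $3$.

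For the first grid $G_6$ with $\vec{\lambda}(G_6)=4$, the preceding analysis has ruled out the seven-node ``central vertex plus six neighbors'' gadget, so I would build $G_6$ by gluing a few triangles along edges (e.g.\ a short strip of three or four adjacent triangles) and orient the edges carefully so that (i) no directed path extends beyond length $3$, and (ii) a labeling with colors from $\{0,1,2,3,4\}$ fits. A natural candidate is to pick a dipath of length $3$, label it $1{\to}3{\to}0{\to}2$ as in Theorem \ref{th.lowerbound}, and then extend the orientation on the added triangles so that every new vertex sits ``against the grain,'' forcing it to be labeled with $4$ (or a safely reused color) without creating new long dipaths. The verification reduces to checking the two conditions of an $L(2,1)$-labeling on the handful of arcs and directed distance-$2$ pairs, which is routine. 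To see that $\vec{\lambda}(G_6) \geq 4$, it is enough to notice that $G_6$ contains a $P_4$ (the dipath of length $3$) and hence by Theorem \ref{th.lowerbound} no labeling in $\{0,1,2,3\}$ can work.

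For the second grid $G'_6$ with $\vec{\lambda}(G'_6)=5$, I would exploit exactly the structural result that the preceding paragraphs have established: the seven-node subgraph formed by a vertex and its six neighbors in $T_6$ admits $\vec{\lambda}=4$ in only four orientations (up to symmetry), namely the ones pictured in Figures \ref{fig.triangular}.c--f, whose longest dipaths have length $2$, $2$, $2$ and $4$ respectively, never $3$. Therefore, any orientation of this seven-node gadget whose longest dipath has length exactly $3$ automatically satisfies $\vec{\lambda} \geq 5$. I would take $G'_6$ to be precisely such an orientation (easy to produce: start from one of the four ``good'' orientations and flip one arc to break the length-$4$ dipath, or simply pick any orientation with maximum dipath $3$, which is abundant). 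The lower bound $\vec{\lambda}(G'_6)\geq 5$ is then a direct corollary of the earlier casework, and I would not need to redo it.

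The remaining obligation for $G'_6$ is the upper bound $\vec{\lambda}(G'_6) \leq 5$, which I would discharge by displaying an explicit $L(2,1)$-labeling of the chosen orientation using colors in $\{0,1,\dots,5\}$, then checking each arc and each directed path of length $2$. This is the step I expect to be the main obstacle, since it is a brute constraint-satisfaction on a dense gadget: six neighbors around the centre create many oriented distance-$2$ pairs, and keeping all their labels distinct while respecting the $\pm 2$ gap on each arc is tight. A reasonable strategy is to place $0$ on the centre, which forces the six outer vertices to use labels in $\{2,3,4,5\}$ with all oriented distance-$2$ pairs distinct; the orientation of the outer hexagonal cycle then determines which pairs are at oriented distance $2$, and one can close the case by a short case analysis guided by the directions of the six spokes. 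Together with the structural lower bound, this yields $\vec{\lambda}(G'_6)=5$ and completes the proof.
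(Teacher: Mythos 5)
There is a genuine gap in your argument for $G_6$: you justify $\vec{\lambda}(G_6)\geq 4$ by saying that $G_6$ contains a dipath of length $3$ and invoking Theorem \ref{th.lowerbound}, but that theorem only gives $\vec{\lambda}\geq 3$ when the longest dipath has length $2$ or $3$ (indeed its proof exhibits the labeling $1\to 3\to 0\to 2$ of $P_4$ with colors in $\{0,1,2,3\}$). So the presence of a $P_4$ does not rule out labelings in $\{0,1,2,3\}$. The correct route, which is the one the paper uses, is the uniqueness argument preceding the theorem: the only $\{0,\dots,3\}$-labeling of a dipath of length $3$ is $1\to3\to0\to2$ (up to reversal), any grid node adjacent to two consecutive nodes of that dipath has no color left in $\{0,\dots,3\}$, and two consecutive arcs of the dipath cannot lie on one triangle; hence the only triangular grid with longest dipath $3$ and $\vec{\lambda}=3$ is $P_4$ itself, and any non-trivial (cycle-containing) triangular grid with longest dipath $3$ satisfies $\vec{\lambda}\geq 4$. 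Your sketch for the upper bound of $G_6$ (glue a few triangles and exhibit a $\{0,\dots,4\}$-labeling) is fine in spirit but no witness is actually produced or checked, whereas the paper simply exhibits one (Figure \ref{fig.triangular34}.a).

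For $G_6'$ your route is genuinely different from the paper's and is sound as a lower bound: since the only orientations of the seven-node gadget admitting $\vec{\lambda}=4$ are those of Figures \ref{fig.triangular}.c--f, whose longest dipaths have length $2$ or $4$, any orientation of that gadget with longest dipath exactly $3$ (such orientations do exist) has $\vec{\lambda}\geq 5$. The paper instead uses a much smaller four-node witness (two triangles sharing an edge, Figure \ref{fig.triangular34}.b) with a direct case analysis on the color of the top node, which makes both bounds immediate. In your version the upper bound is the missing piece and it is not automatic: you must fix a concrete orientation with longest dipath $3$ and actually display a $\{0,\dots,5\}$-labeling of it, since a priori some such orientation could require $6$ colors (nothing proved so far excludes $\vec{\lambda}=6$ with longest dipath $3$). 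As it stands, the proposal leaves both witnesses and both labelings unexhibited, and the $G_6$ lower bound rests on a misreading of Theorem \ref{th.lowerbound}, so the proof is not complete.
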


\begin{proof}
%lambda=4
The grid $G_6$ depicted in  Figure \ref{fig.triangular34}.a. has a longest dipath of length 3.
Exploiting the previous considerations, it follows that  it requires $\vec {\lambda}(G_6) =4$.

%lambda=5
Consider now the triangular grid $G_6'$ in Figure \ref{fig.triangular34}.b.
Let $x$ be its upper node, $w$ and $y$ the middle nodes, and $z$ be the bottom node. By contradiction, let us assume the colors from 0 to 4 are enough to label such a graph.
If $x$ is colored with 0, then $y$ and $w$ must be colored with 2 and 4 in some order, and there are no feasible colors for $z$.
A similar reasoning holds if $x$ is colored with 4.
If, finally, $x$ is labeled with a color among 1, 2, 3, then there are no feasible colors for both $y$ and $w$. So, $\vec{\lambda}(G_6')=5$.
%%lambda=6
%****questo e' sbagliato: l=4. Non ho ancora trovato un grafo con l=3 e lambda=6****
%Finally, consider the triangular grid $G_6'''$  in Figure \ref{fig.triangular3}. The depicted labeling is feasible, the longest dipath has length 3 and, in a exhaustive way, it is possible to prove that $\vec{\lambda}=6$. We omit this case by case proof for the sake of brevity.
\qed
\end{proof}

\begin{figure}[ht]
\begin{center}
    \epsfxsize=11.5cm
	\epsfbox{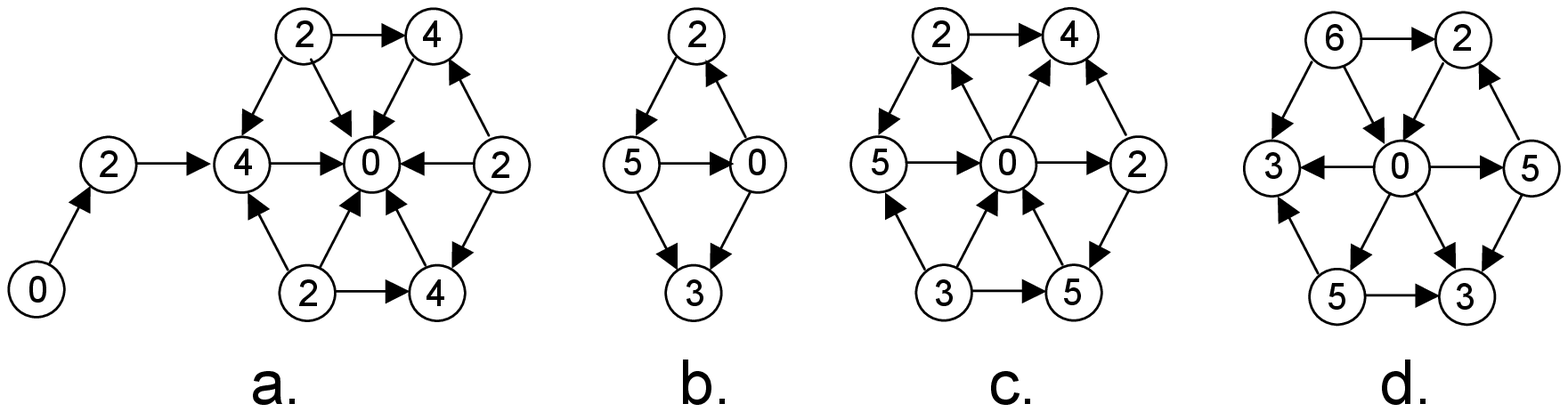}
   \caption{a. and b. Two triangular grids with the length of the longest dipath equal to three having $\vec{\lambda}=4$ and $\vec{\lambda}=5$, respectively;
   c. and d. two triangular grids with the length of the longest dipath equal to four having $\vec{\lambda}=5$ and $\vec{\lambda}=6$, respectively. Optimal labelings are depicted.}
   \protect\label{fig.triangular34}
\end{center}
\end{figure}

Observe that, in order to obtain a triangular grid requiring $\vec{\lambda}=6$, one way consists in considering a grid constituted by a node and 5 of its 6 adjacent nodes, such that each pair of these 5 nodes is connected by a path of length 2.
It turns out that any such orientation has a longest dipath of length at least equal to five. This is a sufficient condition to have $\vec{\lambda}=6$.

The author is convinced that It does not exist any triangular grid $G_6$ whose longest dipath has length 3 such that $\vec{\lambda}(G_6)\geq 6$;
in other words, the suspect is that any triangular grid $G_6$ such that $\vec{\lambda}(G_6)\geq 6$ must have its longest dipath of length at least 4.

%%%%%%%%%%%%%%%%%%%%%%%
\subsection{Longest dipath of length $\geq 4$}
\label{ssec.triangular5}

If a grid has a longest dipath of length 4, then at least colors from 0 to 4 are necessary for it. We will show that there exist triangular grids whose longest dipath has length 4 with $\vec{\lambda}$ equal to either 4, or 5 or 6.

\begin{theorem}
There exist triangular grids $G_6$, $G_6'$ and $G_6''$, whose longest dipath is 4 such that $\vec{\lambda}(G_6)=4$, $\vec{\lambda}(G_6')=5$ and $\vec{\lambda}(G_6'')=6$.
\end{theorem}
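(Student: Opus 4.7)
The plan is to prove all three claims by example, exhibiting concrete triangular grids of longest dipath exactly four (shown in Figure~\ref{fig.triangular34}.c and \ref{fig.triangular34}.d, together with a third, similarly small grid for the first case), each carrying a labeling that witnesses the stated value of $\vec{\lambda}$.

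Preliminarily, observe that Theorem~\ref{th.lowerbound} already gives $\vec{\lambda} \geq 4$ in each case, since the longest dipath has length $\geq 4$. Hence for $G_6$ the upper bound is obtained simply by checking feasibility of the displayed 5-colour labeling; this matches the trivial lower bound, and nothing more is needed.

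For $G_6'$, first verify that the labeling in Figure~\ref{fig.triangular34}.c is feasible with colours $0,\ldots,5$, giving $\vec{\lambda}(G_6') \leq 5$. The harder part is the matching lower bound: the plan is to single out a small substructure of $G_6'$ (typically a triangle together with a further vertex lying in the closed distance-$2$ neighbourhood of each of its nodes) and to argue by case analysis on the label $c$ assigned to a distinguished vertex $v$. For every $c \in \{0,1,2,3,4\}$, show that the labels remaining for the in- and out-neighbours of $v$ (which must lie at value-distance $\geq 2$ from $c$), combined with the pairwise-distinctness constraints between nodes at oriented distance~$2$ from each other, leave no consistent completion. This mimics the style of the $\vec{\lambda}(G_6')=5$ proof of Section~4.1.

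For $G_6''$ the upper bound is again by exhibiting a feasible $6$-colouring (Figure~\ref{fig.triangular34}.d). The lower bound $\vec{\lambda}(G_6'')\geq 6$ is the main obstacle: the clean sufficient condition noted at the end of Section~4.1 --- a node with five adjacent nodes pairwise at distance~$2$ --- is unavailable here, since that configuration was observed to force a dipath of length at least~$5$, whereas we must keep it equal to~$4$. So the witness grid has to be designed more carefully: one selects a central vertex $v$ packing as many neighbours and distance-$2$ relations as possible subject to the dipath-length constraint, and then runs an exhaustive case split on the label of $v$, propagating the forced labels to its neighbours and then to the next layer, exactly in the spirit of the squared-grid proof with $\vec{\lambda}(G_4)=6$ (the nine-node configuration labelled $a,\ldots,i$). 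The bookkeeping is tedious but entirely routine; the genuinely creative step is finding the right small subgrid with dipath exactly~$4$ that still obstructs all $5$-colourings, and that is where the main effort of the proof lies.
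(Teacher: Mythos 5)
Your plan matches the paper's proof in both structure and level of detail: all three values are witnessed by example, with the trivial bound $\vec{\lambda}\geq 4$ from the length-$4$ dipath, feasibility of the displayed labelings for the upper bounds, and exhaustive case-by-case analyses (which the paper itself omits) for the lower bounds $\vec{\lambda}(G_6')\geq 5$ and $\vec{\lambda}(G_6'')\geq 6$. The only cosmetic difference is that the paper takes $G_6$ to be the already-analyzed orientation of Figure~\ref{fig.triangular}.f rather than a new grid, which is exactly the ``labeling with maximum colour $4$ plus trivial lower bound'' argument you describe.
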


\begin{proof}
%lambda=4
$G_6$ is the grid shown in Figure \ref{fig.triangular}.f. We have already shown that the provided labeling is optimal.

%lambda=5 and =6
Consider the grids $G_6'$ and $G_6''$ shown in Figure \ref{fig.triangular34}.c and \ref{fig.triangular34}.d.
The shown $L(2,1)$-labelings are feasible and their optimality can be proved by means of a case by case reasoning, showing that no labeling using colors from 0 to 5 and to 6, respectively, is feasible. \qed
%lambda=7
%??
\end{proof}

For what concerns the values of $\vec{\lambda}$ from 4 to 6 when the longest dipath has length $\geq 5$, we can replicate the method we used in Figure \ref{fig.triangular34}.a. and append a dipath to a node of the grids having a short longest dipath in order to get grids having longest dipath of length 5 keeping the same value of $\vec{\lambda}$.
Hence, the following theorem easily follows.

\begin{theorem}
There exists triangular grids $G_6$, $G_6'$ and $G_6''$ whose longest dipath is 5 such that $\vec{\lambda}(G_6)=4$, $\vec{\lambda}(G_6')=5$ and $\vec{\lambda}(G_6'')=6$.
\end{theorem}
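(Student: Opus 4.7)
The plan is to prove this by construction, following the author's hint that one can \emph{replicate the method} used in Figure~\ref{fig.triangular34}.a and \emph{append a dipath}. Concretely, start from the three grids $H$, $H'$, $H''$ constructed in the previous theorem, which have longest dipath of length 4 and $\vec{\lambda}$ equal to 4, 5, 6, respectively. For each of them, I would attach a single new pendant vertex $u$ via one new arc so as to push the longest dipath up to length exactly 5 while keeping $\vec{\lambda}$ unchanged.

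For each grid $H$ in the list, fix a longest dipath $P$ of length 4 and let $v$ be its terminal (sink). Choose a position $u$ among the (up to six) tile-neighbors of $v$ in the triangular tiling that is not already occupied by $H$ and that is adjacent to no other vertex of $H$. Such a $u$ exists because each vertex of $T_6$ has degree $6$ and the grids $H$, $H'$, $H''$ are very small, so $v$ has several unoccupied tile-neighbors. Add the single arc $v \to u$ to obtain the new grid $G$. By construction $u$ is pendant, so any dipath using the new arc is of the form $Q \cdot (v \to u)$ for some dipath $Q$ of $H$ ending at $v$; since $v$ was chosen as a sink of a longest dipath of $H$, the longest dipath of $G$ has length exactly $4+1 = 5$.

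For the span, keep the optimal labeling $f$ of $H$ and extend it to $u$. The only distance-$1$ constraint on $u$ is $|f(u)-f(v)|\ge 2$, forbidding at most three values; the only distance-$2$ constraints are $f(u)\ne f(w)$ for every in-neighbor $w$ of $v$ in $H$, forbidding at most $d^-_H(v)$ further values. Because $v$ is the sink of a longest dipath in a small grid, $d^-_H(v)$ is tiny (bounded by a small constant that can be read off the figures of the previous theorem), whereas $\vec{\lambda}(H)+1\ge 5$ labels are available, so a feasible label for $u$ always exists. Hence $\vec{\lambda}(G)\le \vec{\lambda}(H)$, and the matching lower bound $\vec{\lambda}(G)\ge \vec{\lambda}(H)$ follows immediately from Theorem~\ref{th.subgraph}, since $H$ is a subgraph of $G$.

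The main obstacle, modest but genuine, is the case-by-case verification that for each of the three starting grids an attachment site $(v,u)$ admitting a feasible color actually exists. This reduces to inspecting, in each exhibited optimal labeling, the labels of $v$ and of its in-neighbors and checking that at least one label in $\{0,\dots,\vec{\lambda}(H)\}$ is free; the flexibility of choosing among multiple longest dipaths and among several unoccupied tile-neighbors of $v$ makes this routine rather than structurally delicate.
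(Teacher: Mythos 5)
Your proposal is correct and is essentially the paper's own argument: the paper simply notes that one can append a dipath to a node of the grids with shorter longest dipath (those of Section~\ref{ssec.triangular5} and Figure~\ref{fig.triangular34}) so that the longest dipath becomes 5 while $\vec{\lambda}$ is preserved, exactly the pendant-vertex extension you describe, with the lower bound coming from Theorem~\ref{th.subgraph}. In fact your write-up supplies more detail (the counting of forbidden labels at the new vertex and the admitted case-check against the exhibited labelings) than the paper, which states only that the theorem ``easily follows.''
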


In order to study other values of $\vec{\lambda}$ in function of the length of the maximum dipath, we prove the following result.

\begin{lemma}
\label{lemma.dist2}
If in any triangular grid $G_6$ there exists a node $v$ whose {\em all} its adjacent nodes are at mutual distance $\leq 2$, then $\vec{\lambda}(G_6) \geq 7$. If there exist three such nodes, and they are at distance 2 each other, then $\vec{\lambda}(G_6) \geq 8$.
\end{lemma}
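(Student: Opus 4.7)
The plan is a straightforward counting and pigeonhole argument around the distinguished node(s), with no appeal to the specific geometry beyond the fact that such a node has the full complement of six neighbors in a triangular grid.

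For the first claim, let $v$ be the node whose six neighbors $u_1,\ldots,u_6$ lie pairwise at mutual distance at most $2$. In any feasible labeling $f$, two observations combine. First, each $u_i$ is adjacent to $v$, so $|f(u_i)-f(v)|\ge 2$. Second, for every pair $\{u_i,u_j\}$ the hypothesis provides a dipath of length at most $2$ from one to the other, which by the $L(2,1)$-labeling definition forces $f(u_i)\ne f(u_j)$. Hence the six neighbors receive six pairwise distinct colors, each drawn from $\{0,\ldots,\vec\lambda\}\setminus\{f(v)-1,f(v),f(v)+1\}$.

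I would then split on the position of $f(v)$. If $f(v)\in\{1,\ldots,\vec\lambda-1\}$, three colors are forbidden, so $(\vec\lambda+1)-3\ge 6$, whence $\vec\lambda\ge 8$. If $f(v)\in\{0,\vec\lambda\}$, only two colors are forbidden, and $(\vec\lambda+1)-2\ge 6$ gives $\vec\lambda\ge 7$. Taking the minimum of the two possibilities yields the first claim.

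For the second claim, I would argue by contradiction: assume $\vec\lambda(G_6)=7$ and apply the case analysis above to each of the three distinguished nodes $v_1,v_2,v_3$. The split forces $f(v_i)\in\{0,7\}$ for every $i$. On the other hand, the hypothesis that $v_1,v_2,v_3$ are pairwise at distance $2$ forces three pairwise distinct colors among them, which is impossible within a two-element set. Hence $\vec\lambda\ge 8$. The only mildly delicate point is fixing the correct reading of ``at mutual distance $\le 2$'' in the oriented setting: one has to observe that, regardless of which of the two possible orientations realises the short dipath between two neighbours of $v$, the $L(2,1)$-labeling constraint still imposes distinct colors on them. Once that is noted, the argument reduces to pure pigeonhole and I anticipate no further combinatorial obstacle.
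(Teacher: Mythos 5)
Your proof is correct and follows essentially the same route as the paper: the first part is the identical pigeonhole count showing a node with six pairwise-distinct-colored neighbors must take an extreme color (else $\vec{\lambda}\geq 8$), and the second part is the same observation that three mutually distance-2 centers cannot all take colors from $\{0,\vec{\lambda}\}$, which you phrase as a clean contradiction where the paper argues sequentially ($v\mapsto 0$, $v'\mapsto 7$, $v''$ stuck). No gap to report.
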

\begin{proof} %controllata
The best case is that $v$ has color either 0 or $\vec{\lambda}$. So it forbids 2 colors to the six $v$'s adjacent nodes, that must receive different colors, as they are at mutual distance $\leq 2$. It follows that at least 8 colors are necessary, i.e. $\vec{\lambda} \geq 7$.

Now, assume that there are three nodes $v$, $v'$ and $v''$ whose adjacent nodes are at mutual distance 2, such that the oriented distances between $v$ and $v'$, $v$ and $v''$, $v'$ and $v''$ are 2.
Node $v$ and its adjacent nodes can be labeled using colors from 0 to 7, in view of the first part of this lemma and, without loss of generality, the label assigned to $v$ is 0.
Even $v'$ and its adjacent nodes can be labeled using colors from 0 to 7 but, in this case, we have to arrange color in order to assign color 7 to $v'$, as it cannot be labeled with the same color as $v$.
We cannot continue this reasoning for $v''$ and its adjacent nodes, as $v''$ cannot be labeled with 0 or 7, that are the colors of $v$ and $v'$, respectively, so it must receive a color forbidding three labels for its adjacent nodes. It follows that color 8 must be used, i.e. $\vec{\lambda} \geq 8$. \qed
\end{proof}

We want to exploit this lemma in order to build a triangular grid requiring at least 8 colors and having longest dipath of minimal length.

By proceeding in a exhaustive way, we can construct all possible orientations of the non-oriented grid in Figure \ref{fig.triangular}.a, that is the minimal graph having a central node and 6 adjacent nodes, imposing that all nodes are at mutual distance $\leq 2$. 
In this way, it turns out that every such orientation has its longest dipath of length $\geq 5$. One of these grids is depicted in Figure \ref{fig.triangular5}.a.
We conclude this argument observing that the nodes lying on the opposite sides of the hexagon are joint by a length 2 dipath passing through the central node and this cannot be avoided. 

In order to apply the second part of Lemma \ref{lemma.dist2}, we consider three triangular grids constituted by a central node and its 6 adjacent nodes; the central nodes $v$, $v'$ and $v''$ must be connected by dipaths of length 2. 
If such dipaths are all disjoint (see Figure \ref{fig.triangular5}.b), then at least two of them must be oriented in the same direction and form a $P_5$. 
This $P_5$ must be in fact a subpath of a $P_7$, due to the presence of the length 2 dipaths passing through the central node.
If, on the contrary, the dipaths are not disjoint (see Figure \ref{fig.triangular5}.c), then it is easy to see that at most one edge can be shared, and again they form a $P_5$ that is a subpath of a $P_7$.
So, in both cases, the built graph has its longest dipath of length $\geq 6$.

\begin{figure}[ht]
\begin{center}
    \epsfxsize=11.5cm
	\epsfbox{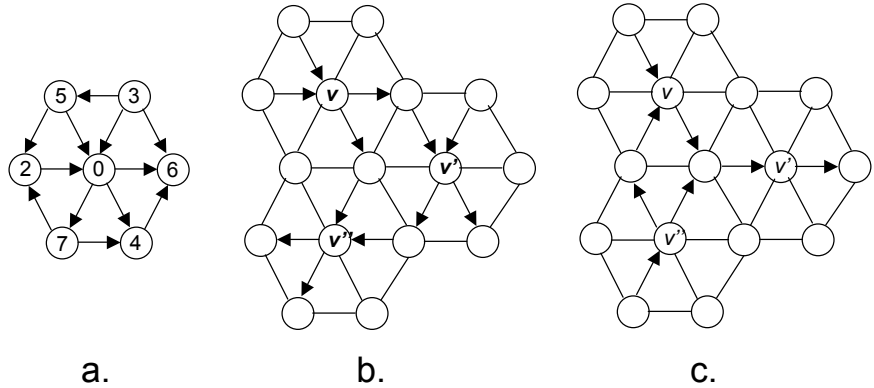}
   \caption{a. A triangular grid with the length of the longest dipath equal to five having $\vec{\lambda}=7$. b. and c. Constructions for a triangular grid with the length of the longest dipath greater than or equal to six having $\vec{\lambda}\geq 8$.}
   \protect\label{fig.triangular5}
\end{center}
\end{figure}

These reasoning leads us to claim the following

\begin{theorem}
There exist triangular grids $G_6$ and $G_6'$, whose longest dipaths are 5 and $\geq 6$, respectively such that $\vec{\lambda}(G_6)=7$ and $\vec{\lambda}(G_6') = 8$.
\end{theorem}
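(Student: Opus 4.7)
The plan is to prove the two parts separately, in each case combining a lower bound coming from Lemma~\ref{lemma.dist2} with the universal upper bound $\vec{\lambda}(G_6)\le\lambda(T_6)=8$ obtained from Theorem~\ref{th.upperbound} together with the undirected result of \cite{CP04}. So I never need a direct labeling argument to push $\vec{\lambda}$ below some target; I just need to produce a witness grid of the correct dipath length that realizes the lower bound from the lemma, and in the case $\vec{\lambda}=7$ supply one explicit labeling with 8 colors $\{0,\dots,7\}$.

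For $G_6$: I would take the oriented grid of Figure~\ref{fig.triangular5}.a, which is the one already produced in the discussion immediately preceding the theorem. Two facts must be verified: first, that the central vertex together with its six neighbors in this orientation satisfy the hypothesis of the first part of Lemma~\ref{lemma.dist2}, i.e.\ every pair of the six neighbors is joined by an oriented path of length at most two; second, that the longest dipath has length exactly five (rather than only ``at least five''). The first check is what the exhaustive discussion of ``mutual distance $\le 2$'' orientations of the hexagonal star in the paragraph before the theorem already accomplishes; the second is a direct inspection of the oriented hexagon. Lemma~\ref{lemma.dist2} then gives $\vec{\lambda}(G_6)\ge 7$, and a feasible labeling with colors $0,\dots,7$ (the one displayed in the figure) supplies the matching upper bound.

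For $G_6'$: I would repeat the construction for three distinct vertices $v,v',v''$, each playing the role of the central node of its own oriented hexagonal star, and arranged so that their pairwise oriented distances are exactly $2$, as demanded by the second part of Lemma~\ref{lemma.dist2}. The lemma then immediately gives $\vec{\lambda}(G_6')\ge 8$, which together with the universal upper bound $8$ yields equality. What remains is to verify that any such arrangement forces the longest dipath to have length at least six. Here I would follow the dichotomy already sketched in the paragraph preceding the theorem: either the three $2$-dipaths joining $v,v',v''$ are pairwise edge-disjoint, in which case at least two of them must be directed consistently (pigeonhole on three oriented $P_3$'s) and so glue to form a $P_5$; or they share an edge, in which case at most one edge can be shared (an easy local check) and one again obtains a $P_5$. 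In either case the resulting $P_5$ is then extended at both ends by one of the compulsory length-$2$ dipaths through a central vertex, producing a $P_7$ and hence longest dipath of length $\ge 6$.

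The main obstacle I expect is the second, purely geometric step: turning the informal ``in either case you get a $P_5$ that sits inside a $P_7$'' argument into a genuinely exhaustive case analysis over the possible relative placements of the three hexagonal stars and the possible shared edges/vertices between their connecting $2$-dipaths. The labeling-theoretic content is almost entirely packaged in Lemma~\ref{lemma.dist2}; the delicate part is showing that the geometric constraints imposed by the lemma cannot be realised in a triangular grid without paying the dipath length cost claimed in the statement.
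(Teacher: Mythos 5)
Your proposal follows essentially the same route as the paper: lower bounds from Lemma~\ref{lemma.dist2}, the witness of Figure~\ref{fig.triangular5}.a with an explicit span-$7$ labeling for the first grid, and the three-hexagonal-star construction with the disjoint/shared-edge dichotomy forcing a $P_5$ inside a $P_7$ for the second, matched against the universal upper bound $\vec{\lambda}\le 8$. If anything, you are more explicit than the paper about the two verifications it leaves implicit (that the longest dipath of the first witness is exactly five, and that a feasible labeling with colors $0,\dots,7$ must be exhibited), which is a point in your favor rather than a divergence.
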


Supported by the previous reasoning and many attempts, the author is convinced that it does not exist any triangular grid $G_6$ whose longest dipath has length 4 such that $\vec{\lambda}(G_6)\geq 7$ nor any triangular grid $G_6'$ whose longest dipath has length 5 such that $\vec{\lambda}(G_6')\geq 8$.
In other words, the conjecture is that any triangular grids $G_6$ and $G_6'$ such that $\vec{\lambda}(G_6)\geq 7$ and $\vec{\lambda}(G_6)= 8$ must have their longest dipath of length at least 5 and $\geq 6$, respectively.

%%%%%%%%%%%%%
\section{Hexagonal Grids}
%%%%%%%%%%%%%

Any hexagonal grid is a bipartite graph, so if the longest dipath has length 3 then $3 \leq \vec{\lambda} \leq 4$.

\begin{theorem}
There exist hexagonal grids $G_3$ and $G_3'$, whose longest dipath is 3 such that $\vec{\lambda}(G_3)=3$ and $\vec{\lambda}(G_3')=4$.
There exists an hexagonal grid $G_3''$ whose longest dipath is 4 such that $\vec{\lambda}(G_3'')=4$.
\end{theorem}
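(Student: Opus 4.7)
The plan is to prove all three existence statements by construction, presenting explicit small oriented hexagonal grids together with concrete $L(2,1)$-labelings, and then arguing optimality.

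For the first grid $G_3$ with longest dipath $3$ and $\vec{\lambda}(G_3)=3$, I would take a single hexagonal cycle and orient its six edges so that no dipath exceeds length $3$ (for instance, alternating arcs along the hexagon so one obtains two dipaths of length $3$ sharing a source and a sink, or three dipaths of length $2$). I then exhibit a labeling with colors from $\{0,1,2,3\}$: assign colors so that along each maximal dipath the optimal pattern $1{\to}3{\to}0{\to}2$ appears, which is possible because the hexagon is bipartite and the orientation can be arranged so the two color classes of the bipartition correspond to $\{0,1\}$ and $\{2,3\}$. The lower bound $\vec{\lambda}(G_3)\geq 3$ is inherited from Theorem~\ref{th.lowerbound} (longest dipath is $3$), so the labeling is optimal.

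For $G_3'$ with longest dipath $3$ and $\vec{\lambda}(G_3')=4$, I mimic the obstruction used for squared grids in Figure~\ref{fig.squared3}.b. The upper bound is Theorem~\ref{th.bipartitel=3}, so I only need to rule out labelings with colors $\{0,1,2,3\}$. I would construct a hexagonal grid in which some dipath of length $3$ wraps around a single hexagonal face so that its endpoints acquire a distance-$2$ relationship via the remaining edges of the hexagon. Since the unique optimal coloring of a $P_4$ is (up to reversal) $1{\to}3{\to}0{\to}2$, the colors $1$ and $2$ appearing at the endpoints collide with some distance-$1$ or distance-$2$ constraint imposed by the rest of the face, forcing the use of a fifth color. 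The verification is a very short case analysis on which end of the $P_4$ gets label $1$ versus $2$.

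For $G_3''$ with longest dipath $4$ and $\vec{\lambda}(G_3'')=4$, the lower bound $\vec{\lambda}\geq 4$ follows from Theorem~\ref{th.lowerbound}. For the upper bound I would take a hexagonal grid consisting of one or two hexagonal faces oriented so that some dipath has length exactly $4$ and no dipath is longer, then exhibit an explicit labeling using only colors $\{0,1,2,3,4\}$. Using the $P_5$ pattern $0{\to}2{\to}4{\to}1{\to}3$ as a backbone and extending it to the remaining nodes in a way that respects bipartiteness is straightforward because $T_3$ has maximum degree $3$, leaving enough slack at each vertex.

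The main obstacle I anticipate is the $G_3'$ case: one must choose an orientation of a single hexagon whose longest dipath is exactly $3$ \emph{and} such that every conceivable placement of the pattern $1{\to}3{\to}0{\to}2$ on a maximal dipath is blocked by the remaining edges. Candidate orientations where the longest dipath is short tend to have too much symmetry, so finding and verifying an orientation that simultaneously satisfies the dipath bound and admits no $4$-color labeling is the delicate step; the two easier constructions reduce to displaying a labeling and invoking the lower-bound theorems already established.
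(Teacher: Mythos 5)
Your constructions for $G_3$ and $G_3''$ are sound and match the paper's spirit (the paper simply exhibits the grids of Figure \ref{fig.hexagonal34} and omits the exhaustive optimality checks), but the $G_3'$ part has a genuine gap, and it is precisely the step you flagged as delicate. The obstruction you borrow from Figure \ref{fig.squared3}.b does not transfer to hexagonal faces: when a dipath of length $3$ wraps around a square face, its endpoints become \emph{adjacent}, so the forced labels $1$ and $2$ violate the distance-one condition; on a hexagonal face, three edges remain between the endpoints, so they are never adjacent, and via those remaining edges they can only be at oriented distance $3$ (if those edges form a consistent dipath) or mutually unreachable. Moreover, even if you did engineer a distance-$2$ relation between the endpoints, that imposes only $f(x)\neq f(y)$, and $1\neq 2$ already holds, so no fifth color is forced. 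Your proposed mechanism therefore cannot produce a conflict.

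In fact no single-hexagon example can work at all: an orientation of a $6$-cycle whose longest dipath has length exactly $3$ decomposes its edges into maximal dipaths of lengths either $3+3$ or $3+1+1+1$, and in both cases a span-$3$ labeling exists (e.g.\ label the cycle $1,3,0,2,0,3$ in order for the $3+3$ orientation; a symmetric assignment works for $3+1+1+1$). So exhibiting $G_3'$ requires a larger hexagonal grid in which the interaction of \emph{several} faces rules out every placement of the rigid patterns $1\to3\to0\to2$ and $2\to0\to3\to1$, which is what the paper's Figure \ref{fig.hexagonal34}.b together with its (omitted) exhaustive case analysis provides; your proposal neither supplies such a grid nor a forcing argument for one. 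Separately, a small fixable slip in the $G_3''$ part: with the backbone $0\to2\to4\to1\to3$ on a single hexagon oriented as a $4+2$ decomposition, the sixth vertex is adjacent to nodes labeled $0$ and $3$ and has no admissible color in $\{0,\dots,4\}$; using the pattern $0\to2\to4\to0\to2$ instead, the sixth vertex can take $4$, so the construction goes through.
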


\begin{proof}
The grids shown in Figure \ref{fig.hexagonal34}.a and  \ref{fig.hexagonal34}.b have their longest dipath of length 3, while the grid in  Figure \ref{fig.hexagonal34}.c has its longest dipath of length 4.
The depicted labelings are feasible. The optimality can be proved by an exhaustive reasoning and hence omitted for the sake of brevity. \qed
\end{proof}

\begin{figure}[ht]
\begin{center}
    \epsfxsize=11.5cm
	\epsfbox{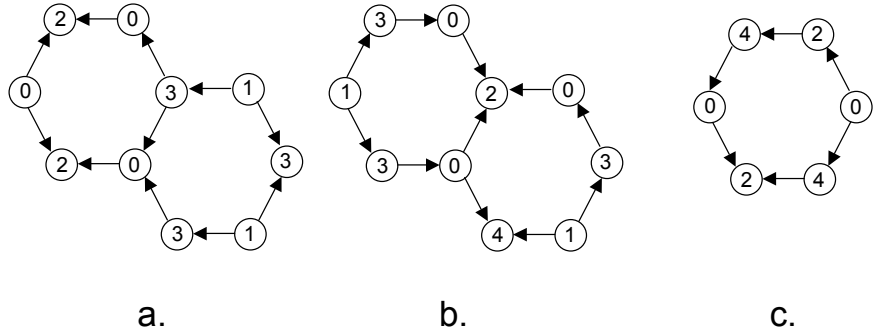}
   \caption{a. and b. Two hexagonal grids with the length of the longest dipath equal to three having $\vec{\lambda}=3$ and $\vec{\lambda}=4$, respectively; c.aA hexagonal grid with the length of the longest dipath equal to four having $\vec{\lambda}=4$. Optimal labelings are depicted.}
   \protect\label{fig.hexagonal34}
\end{center}
\end{figure}

When the length of the longest dipath is $\geq 4$, then we cannot exploit previous results on bipartite graphs. It is known that $\vec{\lambda} \leq \lambda =5$ but the author has not been able to find any orientation of a hexagonal grid such that $\vec{\lambda}=5$. It is convincement of the author that the reason is that no orientation can oblige all the three neighbors of a node to have three different colors.

Due to this consideration, we strongly suspect that no orientation of any hexagonal grid requires $\vec{\lambda} =5$.

%%%%%%%%%%%%%%%%%%%%%
\section{Conclusions and Open Problems}

In this paper we have approached the oriented $L(2,1)$-labeling problem on regular grids, i.e. squared, triangular and hexagonal grids, showing the results divided by length of the longest dipath. 
More precisely, we have evaluated the smallest value of the length of the longest dipath such that a certain value of $\vec{\lambda}$ holds.

Along the paper the author has claimed some convincements, whose the third one is the most important, as if it was true, it would distinguish the behavior of the hexagonal tiling with respect to the squared and triangular tilings. Indeed, in the case of squared and triangular grids, we provided witnesses for every possible value of $\vec{\lambda}$ and we have just conjectured that in order to have certain values of $\vec{\lambda}$, the longest dipath must have a certain minimum length. 
On the contrary, in the case of hexagonal grids, the author thinks that  there is a value  of $\vec{\lambda}$ that is not attainable at all.

We think that the reason of the difference between hexagonal tiling from the one side and squared and triangular tilings from the other side lies in the concept of {\em girth}, i.e. the length of the shortest cycle, of the underlying graph. 
Indeed, informally speaking, if the girth $g$ is small (i.e. $g \leq 4$), together with the regularity of the graph, it is somehow possible to guarantee that all the neighbors of a node get different colors, for opportune orientations, and hence the value of $\vec{\lambda}$ can be increased up to optimal span in the undirected case $\lambda$. This is not possible anymore when $g \geq 5$ as there is no way to guarantee that all the neighbors of a certain node are at mutual distance $\leq 2$. 
So, the author thinks that -- more than depending on the longest dipath length -- the value of $\vec{\lambda}$ could depend on $g$. 
We highlight that the previous papers on this topic could not deal with this parameter, as they restrict the problem to trees (no girth) or to bipartite graphs whose longest dipath has length at most three ($g \leq 4$).

We conclude this paper with a further and very general conjecture:

\begin{conjecture}
Every planar directed graph whose underlying graph has girth $g \geq 5$ has $\vec{\lambda} \leq 5$.
\end{conjecture}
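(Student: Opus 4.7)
The plan is to proceed by induction on $|V(D)|$, exploiting planarity and the girth hypothesis to locate a reducible configuration at every step. Take a minimum counterexample $D$ with underlying graph $U$. By Euler's formula for planar graphs of girth at least $5$, we have $|E(U)|\leq \frac{5}{3}(|V(U)|-2)$, so the average degree of $U$ is strictly below $\frac{10}{3}$. Hence $U$ contains a vertex $v$ with $\deg_U(v)\leq 3$. By the minimality of $D$, the subdigraph $D - v$ admits an $L(2,1)$-labeling $f$ with span at most $5$, and the aim is to extend $f$ to $v$.

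First I would list the constraints forced on $f(v)$: every underlying neighbor $u$ of $v$ rules out the three labels $\{f(u)-1,f(u),f(u)+1\}$, while every vertex $x$ reachable from $v$, or reaching $v$, through a directed path of length $2$ rules out $f(x)$. The girth condition is crucial here: no two underlying neighbors of $v$ share another common neighbor, for that would create a $C_4$, so the various second-shell vertices of $v$ in $U$ are pairwise distinct. I would then split into cases by the numbers of in-neighbors and out-neighbors of $v$. When all three neighbors are out-neighbors, or all three are in-neighbors, no pairwise distance-$2$ constraint through $v$ forces them to receive distinct labels, which opens room for a local recoloring to free one of the six labels for $v$. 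When $v$ has both incoming and outgoing arcs, each in-out pair is at directed distance $2$ through $v$ and so already uses distinct labels, information that restricts how badly the neighborhood can be colored.

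The principal obstacle, and the reason the statement is posed only as a conjecture, is that the three underlying neighbors of $v$ may happen to carry labels $\{0,2,4\}$ or $\{1,3,5\}$, which together forbid every element of $\{0,\ldots,5\}$, so naive extension fails. To bypass this I expect that one must combine the induction with a discharging argument on $U$: assign each vertex initial charge $\deg_U(v) - \frac{10}{3}$ and each face initial charge equal to its length minus $\frac{10}{3}$, so that Euler's formula forces the total charge to be strictly negative, and then design discharging rules that isolate a richer reducible configuration than a single degree-$3$ vertex, for example a degree-$3$ vertex whose second shell already contains another vertex of degree at most $3$, or a short path of low-degree vertices. For each such configuration one must verify, by an exhaustive recoloring argument, that a valid extension with span at most $5$ always exists, possibly after modifying a bounded number of previously assigned labels. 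The outcome of the discharging analysis must be that the minimum counterexample contains none of the listed configurations, contradicting the negative total charge.

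The hardest single step will be the recoloring itself: even after restricting to a small reducible configuration, one must consider every pattern of arc orientations within it and within its neighborhood, and for each pattern produce a valid $L(2,1)$-labeling with the six available labels. Should a purely combinatorial discharging approach prove intractable, a fallback worth attempting is an entropy-compression or Lov\'asz-local-lemma argument applied to random labelings, exploiting the fact that at every vertex only a bounded number of directed $2$-neighbors can appear, thanks to girth at least $5$ combined with planarity.
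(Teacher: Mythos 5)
This statement is the paper's closing \emph{conjecture}: the paper supplies no proof of it at all, and explicitly presents it as an open problem intended to stimulate further research. So there is no argument of the author's to compare yours against, and the real question is whether your proposal itself constitutes a proof. It does not. What you give is a program (minimum counterexample, the girth-$5$ Euler bound $|E|\leq \frac{5}{3}(|V|-2)$ yielding a vertex $v$ of degree at most $3$, then induction plus discharging), and you yourself flag the decisive gap: the three underlying neighbors of $v$ may carry labels $\{0,2,4\}$ or $\{1,3,5\}$, which forbid all six labels, so the simple ``delete a degree-$3$ vertex and extend'' reduction fails. The proposed remedy --- discharging rules isolating richer reducible configurations, each verified by exhaustive recoloring over all orientation patterns --- is never specified, let alone carried out, and the fallback (entropy compression / local lemma) is likewise only named. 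A plan whose central configurations and verifications are left as ``one must verify'' is not a proof.

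There is also a concrete error inside the induction step as you state it. When $v$ has an in-neighbor $u$ and an out-neighbor $w$, the dipath $u \rightarrow v \rightarrow w$ puts $u$ and $w$ at directed distance $2$ \emph{in $D$}, but not in $D-v$; the inductive labeling $f$ of $D-v$ therefore has no reason to assign $u$ and $w$ distinct labels. You assert that such in--out pairs ``already use distinct labels,'' treating this forced constraint as free information, when in fact it is an extra condition that the inductive labeling may violate and that cannot be repaired merely by choosing $f(v)$ --- it requires recoloring vertices other than $v$. So even before the $\{0,2,4\}$ obstruction, the basic extension step is unsound as written. The correct assessment is that the statement remains open, exactly as the paper claims, and your sketch identifies plausible tools (girth-based sparsity, discharging, reducible configurations) without closing any of the steps that would make them into a proof.
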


The author hopes that this latter conjecture will open new research lines in order to explore the oriented $L(2,1)$-labeling problem more in deep.

%%%%%%%%%%%%%%%%%%%%%%%%%%%%%%%%%
%bibliografia

\end{document}